\patchcmd{\maketitle}{\@copyrightpermission}{
  \begin{minipage}{0.3\columnwidth}
    \href{http://creativecommons.org/licenses/by/4.0/}{\includegraphics[width=0.90\textwidth]{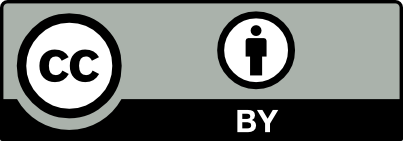}}
  \end{minipage}\hfill
  \begin{minipage}{0.7\columnwidth}
    \href{http://creativecommons.org/licenses/by/4.0/}{This work is licensed under a Creative Commons Attribution International 4.0 License.}
    \end{minipage}
    
   \vspace{5pt}
}{}{}
\keywords{Counterfactual Learning to Rank; Unbiased Estimation; Clicks}
\definecolor{rj}{RGB}{0, 150, 0}
\definecolor{mdr}{RGB}{200, 0, 0}
\definecolor{ho}{RGB}{0, 50, 150}
\acrodef{IR}{information retrieval}
\acrodef{LTR}{learning to rank}
\acrodef{ARP}{average relevance position}
\acrodef{DCG}{discounted cumulative gain}
\acrodef{EM}{expectation-maximization}
\acrodef{CTR}{click-through-rate}
\acrodef{IPS}{inverse-propensity-scoring}
\acrodef{PL}{Plackett-Luce}
\newcolumntype{Y}{>{\centering\arraybackslash}X}
\author{Harrie Oosterhuis}
\affiliation{%
	\institution{Radboud University}
	\city{Nijmegen}
	\country{The Netherlands}
}
\email{harrie.oosterhuis@ru.nl}
\title[Reaching the End of Unbiasedness: The Limitations of Click-Based Learning to Rank]{Reaching the End of Unbiasedness:\\ Uncovering~Implicit~Limitations~of~Click-Based~Learning~to~Rank}
\begin{document}

\begin{abstract}
Click-based \ac{LTR} tackles the mismatch between click frequencies on items and their actual relevance.
The approach of previous work has been to assume a model of click behavior and to subsequently introduce a method for unbiasedly estimating preferences under that assumed model.
The success of this approach is evident in that unbiased methods have been found for an increasing number of behavior models and types of bias.

This work aims to uncover the implicit limitations of the high-level prevalent approach in the counterfactual \ac{LTR} field.
Thus, in contrast with limitations that follow from explicit assumptions, our aim is to recognize limitations that the field is currently unaware of.
We do this by inverting the existing approach: we start by capturing existing methods in generic terms, and subsequently, from these generic descriptions we derive the click behavior for which these methods can be unbiased.
Our inverted approach reveals that there are indeed implicit limitations to the counterfactual \ac{LTR} approach:
we find counterfactual estimation can only produce unbiased methods for click behavior based on affine transformations.
In addition, we also recognize previously undiscussed limitations of click-modelling and pairwise approaches to click-based \ac{LTR}.
Our findings reveal that it is impossible for existing approaches to provide unbiasedness guarantees for all plausible click behavior models.
\end{abstract}

\maketitle

\acresetall

\section{Introduction}
\label{sec:intro}

Search and recommendation systems make large collections easily accessible and navigable for users~\citep{qin2013introducing, dato2016fast, Chapelle2011, sorokina2016amazon, gomez2015netflix, covington2016deep}.
A vital component of such systems are ranking models that - based on the features of the items - create short ranked lists of items to present to the user~\citep{liu2009learning, oosterhuis2021computationally, wang2018lambdaloss, burges2010ranknet}.
The experience of users depends on these rankings providing them the items that are relevant to their current needs or match their preferences~\citep{sanderson2010user, sanderson2010, jarvelin2002cumulated, moffat2013users, moffat2017incorporating}.
Ranking systems are applied to an enormous variety of settings, e.g. web-search engines~\citep{qin2013introducing, dato2016fast, Chapelle2011}, e-commerce~\citep{sorokina2016amazon}, video platforms~\citep{gomez2015netflix, covington2016deep}, online advertising~\citep{lagree2016multiple, Komiyama2015}, etc.
As a result of this large variety, what is relevant or preferred also heavily varies over applications and contexts.
Consequently, methods for optimizing ranking systems based on user click behavior have become increasingly popular, since they have the potential to automatically adapt to the exact setting in which they are applied~\citep{joachims2002optimizing, hofmann2013thesis, oosterhuis2020thesis}.

However, click behavior is not an unbiased representation of user preferences or relevance~\citep{sanderson2010user, joachims2017accurately, radlinski2008does}, since there are many other factors that affect whether an item is clicked or not:
i.e.\ items that are not displayed cannot be clicked~\citep{ovaisi2020correcting}, and furthermore, it is well known that the position at which an item is displayed heavily affects the number of clicks it receives~\citep{craswell2008experimental, pan2007google}.
Accordingly, the unbiased \ac{LTR} field has introduced several techniques for dealing with the mismatch between clicks and actual preferences~\citep{joachims2017unbiased, wang2016learning, oosterhuis2020thesis, agarwal2019counterfactual, ai2021unbiased}.
Typically, previous work in this field first identifies certain forms of bias, e.g.\ position-bias~\citep{wang2016learning, joachims2017unbiased, agarwal2019counterfactual, hofmann2014effects}, item-selection-bias~\citep{ovaisi2020correcting, oosterhuis2020topkrankings}, trust-bias~\citep{agarwal2019addressing, vardasbi2020trust, pan2007google, agichtein2006learning}, etc., 
then it assumes a mathematical model of click behavior that describes the identified bias, and subsequently, introduces a method that can provenly optimize a ranking model unbiasedly.
Thus far, this approach has been very successful as methods have been found for an exceeding number of behavior models~\citep{wang2016learning, joachims2017unbiased, oosterhuis2020topkrankings, vardasbi2020trust, oosterhuis2020thesis}.
The large number of recent publications make it evident that the success of this approach is still ongoing and that unbiased \ac{LTR} remains a very active field~\citep{agarwal2019counterfactual, ovaisi2020correcting, oosterhuis2020topkrankings, agarwal2019addressing, vardasbi2020trust, oosterhuis2021onlinecounterltr, ai2018unbiased, hu2019unbiased, vardasbi2021mixture, oosterhuis2020thesis, zhuang2021cross, ai2021unbiased}.

In this work, we aim to investigate the implicit limitations of the prevalent approach in unbiased \ac{LTR}.
In other words, our goal is to better understand what problems we can and cannot expect unbiased \ac{LTR} to solve in the future.
In particular, our focus is on limitations that are implicit, i.e.\ limitations that follow from implicit assumptions in the existing approach that the field is thus currently unaware of.
We believe this is important for two main reasons:
Firstly, by understanding how the current approach is lacking, the field could be better equipped to find alternative approaches in the future.
Secondly, due to its success, the expectation could exist that every click-based \ac{LTR} method should follow the prevalent unbiased \ac{LTR} approach.
However, such an expectation could - unknowingly - systematically exclude problem settings where this approach can never be successful.
For the field, it is thus important to understand when unbiasedness is impossible and change its expectations for future work accordingly.

Our method of finding the limitations of the prevalent approach is to invert it:
we start with generic formulations that describe the main branches of existing methods in the unbiased \ac{LTR} field: counterfactual estimation~\citep{wang2016learning, joachims2017unbiased, oosterhuis2020thesis} and click modelling~\citep{chuklin2015click, vardasbi2021mixture, zhuang2021cross}.
Subsequently, from these generic descriptions, we derive the conditions under which they can be unbiased.
Our process reveals that conditions exists where counterfactual estimation can \emph{never} produce an unbiased method.
Furthermore, we find that click modelling does not seem to provide robust unbiasedness guarantees.
Lastly, we prove that the starting assumptions of the \emph{Unbiased LambdaMART} method~\citep{hu2019unbiased} are problematic since they implicitly assume clicks are unbiased indicators of relevance.
These findings have serious implications for the unbiased \ac{LTR} field:
most importantly, they reveal that either very different novel approaches are needed for unbiased click-based \ac{LTR} in certain settings, or that it may be the case that unbiasedness will never be feasible in certain circumstances.

\section{Preliminaries: The LTR Goal}

Before we start our discussion of click-based \ac{LTR}, this section will explicitly state the assumptions underlying the \ac{LTR} task.
Firstly, \ac{LTR} work generally assumes that each item has a certain merit~\citep{biega2018equity, liu2009learning, morik2020, diazevaluatingstochastic, burges2010ranknet}, often referred to as its \emph{relevance} (in search contexts) or \emph{user preference} (in recommendation contexts).
For the sake of simplicity, we use the \emph{relevance} terminology for the remainder of this work.
In formal terms, relevance is often interpreted as a scalar value; without loss of generality, we define relevance as the probability that a user prefers this item conditioned on the context $q$, i.e.\ for an item $d$:
\begin{equation}
P(R = 1 \mid d, q) = R_{d|q} \in [0,1].
\end{equation}
In a web-search setting, the context $q$ is generally referred to as the \emph{query}, that represents the user's search request so that $R_{d|q}$ can indicate how well a document and the user's information need matches.
Alternatively, in a recommendation setting $q$ may contain \emph{personalized information} about the user from which their preferences could be inferred.
In addition, we allow $q$ to contain any other contextual information that could affect relevance or preference, e.g.\ the time of day/year, preferred language by the user, etc.
Importantly, our definition of $q$ excludes information specific to any item $d$, i.e.\ for each individual ranking $y$, $q$ is the same for all items. 
Our intentionally generic definition of the context $q$ allows our assumptions to cover an enormous variety of ranking settings.
For simplicity, this work will use the \emph{query} terminology.

The goal of the ranking system is to provide a ranking $y$ of items to the user~\citep{liu2009learning, oosterhuis2021computationally, wang2018lambdaloss, burges2010ranknet, xia2008listwise}, $y$ is simply an ordered list of items:
\begin{equation}
y = [d_1, d_2, \ldots].
\end{equation}
The quality of a ranking for a query $q$ is determined by a weighted sum of associated the $R_{d|q}$ values, where the weight per item is determined by its rank in $y$~\citep{jarvelin2002cumulated, moffat2013users, moffat2017incorporating}.
Let the $\text{rank}(d \mid y) \in \mathbb{N}_{>0}$ function indicate the rank of $d$ in $y$ and $\lambda: \mathbb{N}_{>0} \rightarrow \mathbb{R}$ the weight function, the quality of ranking is then:
\begin{equation}
\mathcal{R}_{q}(y) = \sum_{d \in y} \lambda(\text{rank}(d \mid y)) R_{d|q}.
\label{eq:rankingquality}
\end{equation}
In general, ranking models predict a score for each item conditioned on the query and a ranking is created by sorting items according to their predicted scores~\citep{liu2009learning, oosterhuis2021computationally, wang2018lambdaloss, burges2010ranknet, xia2008listwise}.
In formal terms, the quality of a ranking model $s$ is the expected quality of its rankings over the natural distribution of the queries:
\begin{equation}
\label{eq:existingunbiassystem}
\mathcal{R}(s) = \mathbb{E}_{q}[\mathcal{R}_{q}(y_{q,s})].
\end{equation}
In practice this expectation is often estimated based on a dataset that contains a large sample of queries, i.e.\ web-search datasets that contain many queries logged from users~\citep{qin2013introducing, dato2016fast, Chapelle2011}.

This concludes our very generic description of the \ac{LTR} goal and its explicit assumptions.
We note that even this generic description brings several limitations, for instance, its main assumption is that ranking quality can be decomposed into rank-weights and item relevances that only depend on query-item combinations (Eq.~\ref{eq:rankingquality}).
Consequently, this explicitly excludes any notion of diversity in a ranking, e.g.\ it assumes users do not mind if all items in a ranking are very similar or identical~\citep{xia2017adapting, hurley2011novelty, clarke2008novelty}.
We will not further discuss this particular limitation since it follows from explicit assumptions, and it is well-known and discussed in previous work~\citep{rodrygo2015search, dang2012diversity, radlinski2006improving}.

\section{Existing Click-Based LTR Approach}
\label{sec:background}

This section will describe the two main families of methods in click-based \ac{LTR}: counterfactual estimation and click modelling.

\subsection{Counterfactual Estimation}
\label{sec:existingcounterfactual}

Counterfactual estimation appears to be the most common approach in click-based \ac{LTR} literature~\citep{wang2016learning, joachims2017unbiased, oosterhuis2020topkrankings, vardasbi2020trust, oosterhuis2020thesis, agarwal2019counterfactual, agarwal2019addressing, oosterhuis2021onlinecounterltr, ai2018unbiased, hu2019unbiased, wu2021unbiased}.
This approach starts by assuming a probabilistic model of click behavior, and subsequently, derives an estimation method for estimating the relevance of an item based on clicks.
Mathematically, the method can usually be proven to provide unbiased estimates of the relevances.
To illustrate this approach, we will briefly describe earlier work by \citet{vardasbi2020trust} for affine click models~\citep{agarwal2019addressing}.

Let $k$ denote a display position where an item can be displayed, per position there are the parameters
$\alpha_{k} \in [0,1]$ and $\beta_{k} \in [0,1]$ so that $\alpha_{k} + \beta_{k} \in [0,1]$.
The affine model assumes that the probability of a click $C \in \{0,1\}$ at position $k$ is an affine transformation of the item's relevance:
\begin{equation}
P(C = 1 \mid d, k, q) = \alpha_{k} R_{d|q} + \beta_{k}.
\label{eq:affineclickmodel}
\end{equation}
This model was originally proposed by \citet{agarwal2019addressing} to capture both position-bias and trust-bias. %
In the earliest unbiased \ac{LTR} work a linear transformation was assumed that only describes position-bias~\citep{wang2016learning, joachims2017unbiased, agarwal2019counterfactual, ai2018unbiased}, in the affine model this scenario can be captured with: $\forall k, \, \beta_{k} = 0$.
In the linear case, $\alpha_{k}$ is called a propensity, when $\beta_{k} \not= 0$, $\alpha_{k}$ can be interpreted as the correlation between relevance and clicks at position $k$.
Non-zero $\beta_{k}$ allow the model to capture trust-bias: an increase in clicks regardless of relevance at higher ranks due to the trust users have in the ranking system~\citep{agarwal2019addressing, vardasbi2020trust, vardasbi2021mixture, oosterhuis2021onlinecounterltr, joachims2017accurately, pan2007google, agichtein2006learning}.

Clicks under this affine model are biased in that the expected \acp{CTR} are not equal to relevances.
Let $\mathcal{D}_q$ indicate a set of observed interactions with query $q$, where $c_i(d) \in \{0,1\}$ indicates whether item $d$ was clicked at interaction $i$ and $k_i(d)$ the position at which $d$ was displayed at interaction $i$.
We can then express the bias of \acp{CTR} formally:
\begin{equation*}
\mathbb{E}_{c,k}\bigg[
\frac{1}{|\mathcal{D}_q|} \sum_{i \in \mathcal{D}_q} \mleft(c_i(d)
- R_{d|q}\mright)\bigg]
=
\mathbb{E}_{k}\big[(\alpha_{k(d)} - 1) R_{d|q} + \beta_{k(d)}\big] \not= 0.
\end{equation*}
Depending on the $\alpha$ and $\beta$ parameters and the positions where $d$ was displayed, its \ac{CTR} will over or underestimate its relevance.
Naively ranking items according to their \acp{CTR} will give the items displayed at higher ranks an unfair advantage due to position-bias and trust-bias~\citep{joachims2017unbiased, agarwal2019addressing, joachims2017accurately, craswell2008experimental}.
In other words, high \acp{CTR} are often more indicative of how an item was displayed than its relevance.

Based on an inversion of Eq.~\ref{eq:affineclickmodel}, \citet{vardasbi2020trust} introduced an affine estimator that uses the estimated $\hat{\alpha}$ and $\hat{\beta}$ parameters:
\begin{equation}
\hat{R}_{d|q} = \frac{1}{|\mathcal{D}_q|} \sum_{i \in \mathcal{D}_q}
\frac{c_i(d) - \hat{\beta}_{k_i(d)}}{\hat{\alpha}_{k_i(d)}},
\label{eq:existingcfest}
\end{equation}
that can easily be used to estimate the quality of a ranking:
\begin{equation}
\widehat{\mathcal{R}}_{q}(y) = \sum_{d \in y} \lambda(\text{rank}(d \mid y)) \hat{R}_{d|q}.
\end{equation}
The affine estimator has the following bias:
\begin{equation*}
\mathbb{E}_{c,k}\mleft[
\hat{R}_{d|q}
- {R}_{d|q}
\mright] = 
\mathbb{E}_{k}\mleft[
\frac{\big(\alpha_{k(d)} - \hat{\alpha}_{k(d)}\big) R_{d|q} + \beta_{k(d)} - \hat{\beta}_{k(d)}}{\hat{\alpha}_{k(d)}}
\mright].
\end{equation*}
It is easy to see that this estimator is unbiased when the estimated bias parameters are correct~\citep{vardasbi2020trust, oosterhuis2021onlinecounterltr, vardasbi2021mixture}.
Moreover, it can then also be used for an unbiased estimate of ranking quality:
\begin{equation}
\begin{split}
&\mleft(
\forall d \in y, \,
\forall i \in \mathcal{D}_q,\, \hat{\alpha}_{k_i(d)} = \alpha_{k_i(d)} \land 
\hat{\beta}_{k_i(d)} = \beta_{k_i(d)} \mright)
\\ &\hspace{0.36cm}
\longrightarrow
\forall d \in y, \,
\mathbb{E}_{c,k}\mleft[
\hat{R}_{d|q}\mright] = R_{d|q}
\longrightarrow
\mathbb{E}_{c,k}\mleft[
\widehat{\mathcal{R}}_{q}(y)\mright] 
= \mathcal{R}_{q}(y).
\end{split}
\end{equation}
Trivially, this can provide an unbiased estimate of model performance $\mathcal{R}(s)$ which in turn can be optimized with standard \ac{LTR} techniques~\citep{joachims2017unbiased, oosterhuis2020topkrankings, agarwal2019counterfactual, wang2018lambdaloss}.
Therefore, \citet{vardasbi2020trust} provide an unbiased \ac{LTR} method for their assumed affine click model.

The above example illustrates how \citet{vardasbi2020trust} applied the common counterfactual approach to unbiased \ac{LTR}, other work has applied it to many different models of click behavior:
For example, early work applied this approach to position-bias only~\citep{wang2016learning, joachims2017unbiased, agarwal2019counterfactual, ai2018unbiased};
\citet{fang2019intervention} to position-bias that varies per query;
\citet{oosterhuis2020topkrankings} to position-bias and item-selection-bias;
\citet{vardasbi2020cascade} to cascading examination behavior;
\citet{wu2021unbiased} to a bias from surrounding items; and \citet{oosterhuis2021onlinecounterltr} address position-bias, trust-bias and item-selection-bias all at once.

\subsection{Click Modelling}
\label{sec:clickmodel}
The other main approach in click-based \ac{LTR} is click modelling, this approach is both used to make relevance estimates~\citep{chuklin2015click, borisov2016neural, zhuang2021cross, chen2020context, chapelle2009dynamic, borisov2016context, borisov2018click} as well as estimating the bias parameters values for counterfactual estimation~\citep{wang2016learning, vardasbi2020trust, oosterhuis2020taking, wang2018position}.
The approach is to fit a generative model of click behavior on the observed click data, the hope is that the resulting fitted model captures both the bias and relevances by modelling both simultaneously~\citep{chuklin2015click}.
Similar to the counterfactual estimation, the click modelling approach starts by assuming a probabilistic model of click behavior.
For example, we can again assume the affine click model in Eq.~\ref{eq:affineclickmodel}~\citep{agarwal2019addressing, vardasbi2020trust}, the predictive model we want to fit would thus be:
\begin{equation}
\widehat{P}(C = 1 \mid d, k, q) = \hat{\alpha}_{k} \hat{R}_{d|q} +  \hat{\beta}_{k}.
\end{equation}
Let $\hat{A}$ and $\hat{B}$ indicate the set of estimated $\hat{\alpha}$ and $\hat{\beta}$ parameters and $\hat{R}_q$ a vector of relevance estimates of all items $\hat{R}_{d|q}$ for query $q$.
The model can be fitted to data using the negative log-likelihood loss:
\begin{equation}
\begin{split}
&\mathcal{L}(\hat{R}_q, \hat{A},  \hat{B}, \mathcal{D}_q)
=
\frac{-1}{|\mathcal{D}_q||\hat{R}_q|}\sum_{i \in \mathcal{D}_q}\sum_{d \in \hat{R}_q}
c_i(d) \log\big(\hat{\alpha}_{k_i(d)}\hat{R}_{d|q}
\\&\hspace{0.7cm}
+ \hat{\beta}_{k_i(d)}\big)
+ \big(1 - c_i(d)\big) \log\big(1 - \big(\hat{\alpha}_{k_i(d)}\hat{R}_{d|q} + \hat{\beta}_{k_i(d)}\big)\big).
\end{split}
\end{equation}
An important difference with the counterfactual estimation approach is that the relevance estimates $\hat{R}_q$ are not the output of an estimator function but parameters to be optimized.
If we consider the expected value of the loss, we see that in expectation it is equal to a negative log-likelihood loss between the predictive click model and the actual click probabilities:
\begin{align}
&\mathbb{E}_{c,k}\mleft[\mathcal{L}(\hat{R}_q, \hat{A},  \hat{B}, \mathcal{D}_q)\mright]
 \\[-0.5ex]&=
\mathbb{E}_{k}\Big[ \frac{-1}{|\hat{R}_q|}\sum_{d \in \hat{R}_q}
(\alpha_{k(d)}R_{d|q} + \beta_{k(d)}) \log\big(\hat{\alpha}_{k(d)}\hat{R}_{d|q} + \hat{\beta}_{k(d)}\big)
\nonumber\\[-1ex]&\hspace{0.4cm}
+ \big(1 - \big({\alpha}_{k(d)}{R}_{d|q} + {\beta}_{k(d)}\big)\big) \log\big(1 - \big(\hat{\alpha}_{k(d)}\hat{R}_{d|q} + \hat{\beta}_{k(d)}\big)\big) \big]
\nonumber
\\[-0.5ex]
&= 
\frac{-1}{|\hat{R}_q|}\sum_{d \in \hat{R}_q}
\mathbb{E}_{k}\big[ P(C = 1 \mid d, k, q) \log\big( \widehat{P}(C = 1 \mid d, k, q)\big)
\nonumber\\[-1.2ex] \nonumber&\hspace{1.5cm}
+
\big(1-P(C = 1 \mid d, k, q)\big) \log \big(1-\widehat{P}(C = 1 \mid d, k, q)\big)\big].
\end{align}
The hope is that by minimizing the loss, the model learns the relevances and bias parameters correctly, since correct parameters would minimize the loss in expectation:
\begin{align}
&
\mleft(
\mleft(
\forall d \in \hat{R}_q, \, \hat{R}_{d|q} = R_{d|q} \mright) \land
\mleft(
\forall k,\, \hat{\alpha}_{k} = \alpha_{k} \land 
\hat{\beta}_{k} = \beta_{k}
 \mright) \mright)
\\
&\;
\longrightarrow
\mathbb{E}_{c,k}\mleft[\mathcal{L}(\hat{R}_q, \hat{A},  \hat{B}, \mathcal{D}_q)\mright]
= \min_{\hat{R}_q', \hat{A}',  \hat{B}'} \mathbb{E}_{c,k}\mleft[\mathcal{L}(\hat{R}_q', \hat{A}',  \hat{B}', \mathcal{D}_q)\mright].
\nonumber
\end{align}
However, it is important to realize the direction of the implication which reveals that this is not a strong guarantee: minimizing the expected loss does not imply that the correct parameters have been found.
Section~\ref{sec:limitclickmodel} will take a critical look at the unbiasedness gaurantees this approach can provide.

Click modelling was initially - and could still be - seen as its own field within \acl{IR}: early methods used intuitive graphical models~\citep{chuklin2015click, chapelle2009dynamic, borisov2016context}, later work relies on neural models that are hard to interpret but have better predictive performance~\citep{borisov2016neural, chen2020context, borisov2018click}.
With the inception of unbiased \ac{LTR}, click models became a common choice for estimating bias parameters for counterfactual estimation instead of relevance estimation~\citep{wang2016learning, vardasbi2020trust, oosterhuis2020taking, wang2018position}.
However, recently novel click models for relevance estimation have been introduced again:
For example, \citet{vardasbi2021mixture} proposed a click model that uses a prior distribution over relevances; and \citet{zhuang2021cross} used a click model with a state-of-the-art deep learning architecture for relevance estimation in a grid layout.
They found their click model to be more effective than existing counterfactual estimation techniques in a real-world recommendation setting.

\subsection{Shortcomings of the Existing Approach}
\label{sec:shortcomings}

The unbiased \ac{LTR} field has been very successful both in terms of theoretical results~\citep{oosterhuis2020thesis, oosterhuis2021onlinecounterltr, joachims2017unbiased, morik2020} as in performance improvements for real-world search and recommendation systems~\citep{wang2016learning, joachims2017unbiased, zhuang2021cross, agarwal2019addressing}.
However, there are also some well-known shortcomings with the existing approach that should be discussed:

First, while in theory counterfactual estimation can provide provenly unbiased estimators, in practice the application of these estimators is rarely actually unbiased.
There are three main reasons for this mismatch:
\begin {enumerate*} [label=(\roman*)]
\item the assumed behavior model is incorrect - real-world user behavior is rarely captured perfectly by mathematical models;
\item the propensities are inaccurate - perfect propensity estimation is infeasible in practice~\citep{wang2018position, agarwal2019estimating};
and most importantly \item it is standard to apply clipping in practice - propensities are clipped by a threshold $\tau \in [0,1]$: 
$\hat{\alpha}_{k(d)}^\text{clip} = \max(\hat{\alpha}_{k(d)}, \tau)$, this introduces some bias but can greatly reduce variance~\citep{joachims2017unbiased, strehl2010logged}.
\end{enumerate*}
The first two points are understandable: models and parameters are often inaccurate and small inaccuracies are often inconsequential.
However, clipping generally has a large effect and purposefully introduces notable bias to the estimation process.
It is at least surprising that most work in the unbiased \ac{LTR} field puts much emphasis on the unbiased estimators they introduce, while simultaneously applying biased versions of the estimators in their experiments~\citep{joachims2017unbiased, agarwal2019addressing, oosterhuis2021onlinecounterltr, vardasbi2020trust, oosterhuis2020topkrankings}.

Second, while click modelling has seen real-world success~\citep{zhuang2021cross}, there are very little guarantees regarding its performance.
In stark contrast with counterfactual estimation - and to the best of our knowledge - it is currently unclear whether there are broad  theoretical guarantees for when click models can provide correct relevance estimates.
Even when the structure of the assumed behavior model is correct, that does not guarantee that a fitted click model will have unbiased relevance estimates.
From a theoretical perspective, it is thus unclear in what circumstances click models are reliable or not.

Lastly, while there have been several high-level looks at the unbiased \ac{LTR} field, they have mainly compared the subgroups of online and counterfactual \ac{LTR} methods~\citep{ai2021unbiased, jagerman2019comparison, oosterhuis2020unbiased}.
To the best of our knowledge, previous work has not looked at the properties of the high-level approach of the Unbiased \ac{LTR} field, instead of its individual methods.
It is thus currently not known what the limitations of the prevalent approach are, nor what they may entail.

\section{Methodology}
\label{sec:method}

Section~\ref{sec:background} described the prevalent approaches in the unbiased \ac{LTR} field and noted that the limitations of the high-level approach are currently unclear.
In this work, we will investigate whether there are implicit assumptions in the two main families of debiasing approaches.
In particular, we want to find out whether they impose limits on which circumstances estimation methods can or cannot be unbiased or consistent. 
This section will explain our methodology to investigating such implicit limitations, we will start by defining the two theoretical properties that we are interested in.

We begin by formally defining unbiasedness:
\begin{definition}
\label{def:unbiasedness}
\emph{Unbiasedness.}
A click-based relevance estimation method is \emph{unbiased}, if the expected values of all the resulting relevance estimates are equal to the true relevances:
\begin{equation}
\forall (d, q),\;
\mathbb{E}\big[ \hat{R}_{d|q} \big] = R_{d|q}.
\end{equation}
\end{definition}
Our definition is slightly different than the standard definition in the previous literature:
where existing work has focussed on the estimated performance of a ranking system (Eq.~\ref{eq:existingunbiassystem})~\citep{joachims2017unbiased, oosterhuis2020thesis, agarwal2019counterfactual}, we focus on bias of the individual relevance estimates.
We motivate this difference with three arguments:
\begin {enumerate*} [label=(\roman*)]
\item Our definition based on individual relevance estimates practically guarantees an unbiased performance estimate (c.f.\ Eq.~\ref{eq:rankingquality});
\item To the best of our knowledge, there is no existing work that provides an unbiased performance estimate without also providing unbiased relevance estimates~\citep{wang2016learning, joachims2017unbiased, oosterhuis2020thesis, agarwal2019counterfactual, oosterhuis2020topkrankings, vardasbi2020trust, oosterhuis2021onlinecounterltr, ai2018unbiased, hu2019unbiased, wu2021unbiased, vardasbi2020cascade}.
In other words, all existing unbiased \ac{LTR} methods are also unbiased under our definition.
\item Our definition is simpler and easier to work with.
\end{enumerate*}
To summarize, we define unbiasedness based on individual relevance estimates because it is easier to work with and is interchangeable with the unbiased definition based on system performance in previous work.

While unbiasedness has been the main focus of most click-based \ac{LTR} work, it is not the only property that is important to the field.
As Section~\ref{sec:shortcomings} noted, in practice counterfactual estimation methods are often deployed in a biased manner.
In particular, propensity clipping introduces some bias but reduces variance greatly and is widely applied in counterfactual \ac{LTR}~\citep{joachims2017unbiased, agarwal2019addressing, oosterhuis2021onlinecounterltr, vardasbi2020trust, oosterhuis2020topkrankings}.
Similarly, it is conceptually hard to apply the unbiasedness property to click modelling methods~\citep{vardasbi2021mixture}.
For these reasons, we will also consider \emph{consistency}~\citep{oosterhuis2020taking} with the following formal definition:
\begin{definition}
\label{def:consistency}
\emph{Consistency.}
A click-based relevance estimation method is \emph{consistent}, if the values of all the resulting relevance estimates are equal to the true relevances in the limit of an infinite number of interactions:
\begin{equation}
\lim_{|\mathcal{D}_q| \to \infty} \hat{R}_{d|q} = R_{d|q}.
\end{equation}
\end{definition}
Consistency is a desirable property since it guarantees accurate convergence as interaction data continues to increase.
Furthermore, variance is generally negligible when the number of interaction is extremely large~\citep{joachims2017unbiased}, therefore unlike unbiasedness, there is often not a trade-off between variance and consistency.

Our goal is thus to identify under which conditions the unbiased \ac{LTR} approach can provide methods that are unbiased and consistent according to our definitions.
Our method to identifying these conditions inverts the prevalent approach that starts with behavior assumptions and derives an unbiased method.
In contrast, we first describe the two main families: counterfactual estimation and click modelling, in the most generic terms.
Subsequently, we derive click behavior conditions from these generic methods, thereby revealing the assumptions that are implicitly present in the existing approach.
The following two sections will do this for counterfactual estimation and click modelling, in addition, Section~\ref{sec:othermethods} will discuss other methods that fall outside the former two categories.

\section{The Limitations of Click-Based Counterfactual Estimators}
\label{sec:limitcount}

As discussed in Section~\ref{sec:existingcounterfactual}, counterfactual estimation represents the largest branch of the unbiased \ac{LTR} field~\citep{wang2016learning, joachims2017unbiased, oosterhuis2020thesis}.
The underlying assumption of these methods is that click probabilities are decomposable into relevance and display factors, e.g.\ the estimator in Eq.~\ref{eq:existingcfest} assumes click probabilities are determined by relevance and display-position according to Eq.~\ref{eq:affineclickmodel}~\citep{vardasbi2020trust}.
Click-based counterfactual estimators aim to convert a click signal into an unbiased relevance signal by correcting for the display factors.
In order to investigate the entire family of click-based counterfactual estimators, we first define a display context $x(d)$ in the most generic and broad terms:
\begin{definition}
\label{def:displaycontext}
\emph{Display Context.}
The display context $x_i(d)$ contains all information about how item $d$ is displayed that could affect the click probability of $d$ at interaction $i$.
It does not contain any information about the relevance $d$: $R_{d|q}$.
In other words,
one should be able to determine $x_i(d)$ without any knowledge of $R_{d|q}$.
\end{definition}
For example, for the estimator in Eq.~\ref{eq:existingcfest} the display context is the display position: $x_i(d) = k_i(d)$~\citep{agarwal2019addressing}. 
Many alternatives are possible, e.g.\ $x_i(d)$ can also represent a probability distribution over positions as in the policy-aware estimator ~\citep{oosterhuis2020topkrankings, oosterhuis2021onlinecounterltr}.
With this broad definition of the display context, we propose a generic description of a click-based counterfactual estimator:
\begin{definition}
\label{def:counterfactualestimate}
\emph{Counterfactual Relevance Estimate.}
A click-based counterfactual relevance estimate is an average over independently sampled interactions where each click or non-click is transformed by a function $f$ such that:
\begin{equation}
\hat{R}_{d|q} = \frac{1}{N_q}\sum_{i \in \mathcal{D}_q} f(c_i(d), x_{i}(d)).
\label{eq:counterfactualestimate}
\end{equation}
Accordingly, $f$ only has two relevant values per context $x(d)$ for the counterfactual estimate:
$f(1, x(d))$ when a click takes place on $d$ and $f(0, x(d))$ when no click takes place.
\end{definition}
To the best of our knowledge, Definition~\ref{def:counterfactualestimate} covers all existing click-based counterfactual estimators~\citep{wang2016learning, joachims2017unbiased, oosterhuis2020topkrankings, vardasbi2020trust, oosterhuis2020thesis, agarwal2019counterfactual, agarwal2019addressing, oosterhuis2021onlinecounterltr, ai2018unbiased, wu2021unbiased, vardasbi2020cascade} with the exception of three counterfactual pairwise estimators~\citep{wang2021non, hu2019unbiased, saito2020unbiased} that will be discussed in Section~\ref{sec:othermethods}.
For example, we see that the estimator in Eq.~\ref{eq:existingcfest} is a specific instance where $f$ is chosen so that $f(1,x_i(d)) = \frac{1- \beta_{k_i(d)}}{\alpha_{k_i(d)}}$ and $f(0,x_i(d)) = \frac{- \beta_{k_i(d)}}{\alpha_{k_i(d)}}$.
The aim of our definition is to cover all existing estimators and as many future estimators as possible.
Accordingly, - to the best of our knowledge - different choices of $f$ can cover almost all existing counterfactual estimators, and moreover, they can also cover many more possible estimators that have yet to be introduced to the field.

With these definitions, we can now start to derive the conditions for which a click-based counterfactual estimator is unbiased or consistent.
First, we note that the expected value of a relevance estimate $\hat{R}_{d|q}$ is simply the expected value of $f$ conditioned on $q$:
\begin{lemma}
\label{lemma:cfest:expected}
In expectation a counterfactual relevance estimate is:
\begin{equation}
\mathbb{E}_{c,x}\mleft[ \hat{R}_{d|q} \mright] 
= \mathbb{E}_{c,x}\mleft[ f(c(d), x(d)) \mid q \mright].
\end{equation}
\end{lemma}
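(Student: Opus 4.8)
The plan is to unfold the definition of the counterfactual relevance estimate from Equation~\ref{eq:counterfactualestimate} and push the expectation through the finite sum. Writing $N_q = |\mathcal{D}_q|$, linearity of expectation immediately gives
\begin{equation*}
\mathbb{E}_{c,x}\mleft[ \hat{R}_{d|q} \mright]
= \frac{1}{N_q}\sum_{i \in \mathcal{D}_q} \mathbb{E}_{c,x}\mleft[ f(c_i(d), x_i(d)) \mid q \mright],
\end{equation*}
since the sum ranges over a finite index set. The remaining step is to observe that, for a fixed query $q$, every logged interaction $i \in \mathcal{D}_q$ is generated by the same logging process, so the pairs $(c_i(d), x_i(d))$ are identically distributed conditioned on $q$. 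Hence each of the $N_q$ summands equals the common value $\mathbb{E}_{c,x}[ f(c(d), x(d)) \mid q ]$, and dividing their sum by $N_q$ collapses it to exactly that value, which is the claimed identity.

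I do not expect a genuine obstacle here: the result is essentially a restatement of linearity of expectation for an average of identically distributed terms. The one point that deserves an explicit word is that $N_q$ must be treated as a deterministic normalizer — $\mathcal{D}_q$ is the set of \emph{observed} interactions introduced in Section~\ref{sec:existingcounterfactual}, so its cardinality is fixed before the expectation over the clicks $c$ is taken, and $1/N_q$ may legitimately be pulled outside the expectation. If instead $N_q$ were random and statistically coupled to the clicks, this step would be invalid and one would need a Wald-type argument; under the setup used throughout the paper this situation does not arise. Note also that the independence of the interactions asserted in Definition~\ref{def:counterfactualestimate} is not actually needed for this particular lemma — only their common conditional distribution is used — and independence will instead become relevant for the later consistency and variance arguments.
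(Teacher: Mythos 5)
Your proof is correct and follows essentially the same route as the paper's: unfold Definition~\ref{def:counterfactualestimate}, apply linearity of expectation to the finite sum, and use the fact that the interactions are identically distributed given $q$ to collapse the average to a single expectation. Your added remarks (that $N_q$ is a deterministic normalizer and that independence is not needed here, only identical distribution) are accurate refinements of the same argument.
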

\begin{proof} Using Definition~\ref{def:counterfactualestimate}:
\begin{align}
&\mathbb{E}_{c,x}\mleft[ \hat{R}_{d|q} \mright]
= \mathbb{E}_{c,x}\bigg[ \frac{1}{N_q}\sum_{i \in \mathcal{D}_q} f(c_i(d), x_{i}(d)) \bigg]
\\
&= \mathbb{E}_{c,x}\mleft[ f(c_i(d), x_{i}(d)) \mid i \in D_q \mright]
= \mathbb{E}_{c,x}\mleft[ f(c(d), x(d)) \mid q \mright]. \qedhere
\end{align}
\end{proof}
With this Lemma, we can prove that an unbiased estimator is always consistent and vice-versa:
\begin{theorem}
\label{theorem:cfbiasandcons}
A counterfactual estimator is consistent if and only if it is unbiased:
\begin{equation}
\mathbb{E}_{c,x}\mleft[ \hat{R}_{d|q} \mright] = R_{d|q} \longleftrightarrow \lim_{|\mathcal{D}_q| \to \infty} \hat{R}_{d|q} = R_{d|q}.
\end{equation}
\end{theorem}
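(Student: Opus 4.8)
The plan is to treat the relevance estimate $\hat{R}_{d|q}$ as a sample average of independent, identically distributed random variables and invoke the law of large numbers, so that both \emph{unbiasedness} and \emph{consistency} reduce to one and the same identity, $\mathbb{E}_{c,x}\mleft[ f(c(d), x(d)) \mid q \mright] = R_{d|q}$. Lemma~\ref{lemma:cfest:expected} already handles one side of that reduction; the law of large numbers will handle the other.

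First I would observe that by Definition~\ref{def:counterfactualestimate} the summands $f(c_i(d), x_i(d))$ for $i \in \mathcal{D}_q$ are i.i.d.\ copies of the single random variable $f(c(d), x(d))$ conditioned on $q$ — the interactions are independently sampled and all share the same query $q$. Assuming this variable has a finite mean (true for every estimator of interest, e.g.\ because $f$ attains only finitely many values), the law of large numbers gives
\begin{equation}
\lim_{|\mathcal{D}_q| \to \infty} \hat{R}_{d|q}
= \lim_{|\mathcal{D}_q| \to \infty} \frac{1}{N_q}\sum_{i \in \mathcal{D}_q} f(c_i(d), x_i(d))
= \mathbb{E}_{c,x}\mleft[ f(c(d), x(d)) \mid q \mright].
\end{equation}

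Next I would chain this with Lemma~\ref{lemma:cfest:expected}, i.e.\ $\mathbb{E}_{c,x}[\hat{R}_{d|q}] = \mathbb{E}_{c,x}[f(c(d),x(d)) \mid q]$. For the forward direction: unbiasedness says $\mathbb{E}_{c,x}[\hat{R}_{d|q}] = R_{d|q}$, so by the Lemma $\mathbb{E}_{c,x}[f(c(d),x(d))\mid q] = R_{d|q}$, and the displayed limit then yields $\lim_{|\mathcal{D}_q|\to\infty}\hat{R}_{d|q} = R_{d|q}$, which is consistency. For the reverse direction: consistency says that limit equals $R_{d|q}$, so comparing with the displayed limit forces $\mathbb{E}_{c,x}[f(c(d),x(d))\mid q] = R_{d|q}$, and the Lemma returns $\mathbb{E}_{c,x}[\hat{R}_{d|q}] = R_{d|q}$, which is unbiasedness.

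The main obstacle is not depth but precision: one must make explicit the independence and identical distribution of the summands — exactly what is baked into Definition~\ref{def:counterfactualestimate}'s phrase ``independently sampled interactions'' — together with the integrability needed to apply the law of large numbers, and one should state the mode of convergence intended by Definition~\ref{def:consistency} (almost sure or in probability), since the law of large numbers delivers whichever is meant. Once these technical points are pinned down, the equivalence is immediate, because both properties collapse to the single identity $\mathbb{E}_{c,x}[f(c(d),x(d))\mid q] = R_{d|q}$.
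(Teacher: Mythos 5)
Your proposal is correct and matches the paper's own argument: the paper likewise applies the law of large numbers to the average in Definition~\ref{def:counterfactualestimate} to identify the limit with $\mathbb{E}_{c,x}[f(c(d),x(d))\mid q]$, and then invokes Lemma~\ref{lemma:cfest:expected} to equate that with $\mathbb{E}_{c,x}[\hat{R}_{d|q}]$, so both properties collapse to the same identity. You are merely more explicit than the paper about the i.i.d.\ structure, integrability, and mode of convergence, which is a reasonable tightening rather than a different route.
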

\begin{proof}
With the use of Definition~\ref{def:counterfactualestimate} and Lemma~\ref{lemma:cfest:expected}:
\begin{equation}
\lim_{|\mathcal{D}_q| \to \infty} \hat{R}_{d|q}
= \mathbb{E}_{c,x}\mleft[f(c_i(d), x_{i}(d)) \mid q \mright]
= \mathbb{E}_{c,x}\mleft[ \hat{R}_{d|q} \mright].
\qedhere
\end{equation}
\end{proof}
Therefore, it appears that the focus of previous work to prove unbiasedness was not incorrectly placed, since implicitly this has also proved consistency.
Nevertheless, we think it is important that we have a theoretical motivation for this focus now.
Finally, we can derive the following conditions for the unbiasedness - and therefore also the consistency - of click-based counterfactual estimators:
\begin{theorem}
\label{theorem:counterestimateunbiased}
\label{theorem:cf}
A counterfactual estimator can only be unbiased if click probabilities follow an affine transformation of relevance s.t:
\begin{equation}
\label{eq:counter:clickprobaffine}
R_{d|q} =
\mathbb{E}_{x}\mleft[ 
\frac{P(C =1 \mid d, x, q) - \beta_{x(d)}}{\alpha_{x(d)}}
\mid q\mright].
\end{equation}
\end{theorem}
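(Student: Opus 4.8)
The plan is to establish necessity by starting from the unbiasedness requirement and unfolding the expectation until the affine structure is forced out. By Definition~\ref{def:unbiasedness}, unbiasedness means $\mathbb{E}_{c,x}[\hat{R}_{d|q}] = R_{d|q}$ for every $(d,q)$, and by Lemma~\ref{lemma:cfest:expected} this is equivalent to $R_{d|q} = \mathbb{E}_{c,x}[f(c(d),x(d)) \mid q]$. First I would condition on the display context and take the expectation over the click: since $c(d)$ given $x(d)$ and $q$ is Bernoulli with parameter $P(C=1\mid d,x,q)$, and since $f$ has only the two relevant values $f(1,x(d))$ and $f(0,x(d))$ per context (Definition~\ref{def:counterfactualestimate}), the inner expectation collapses to
\[
\mathbb{E}_{c}\left[f(c(d),x(d)) \mid x(d), q\right] = f(0,x(d)) + P(C=1\mid d,x,q)\bigl(f(1,x(d)) - f(0,x(d))\bigr).
\]

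The conceptual heart of the argument --- and the step I expect to be the main obstacle --- is then to observe that $f(0,x(d))$ and $f(1,x(d))$ depend on the display context \emph{only}. This is where Definition~\ref{def:displaycontext} does the real work: because $x(d)$ carries no information about $R_{d|q}$, the two numbers $f(0,x(d))$ and $f(1,x(d))$ must be the same regardless of the item's relevance, so they behave as fixed per-context constants that cannot depend on $R_{d|q}$. Defining $\alpha_{x(d)} := \bigl(f(1,x(d)) - f(0,x(d))\bigr)^{-1}$ and $\beta_{x(d)} := -f(0,x(d))\bigl(f(1,x(d)) - f(0,x(d))\bigr)^{-1}$, the inner expectation above equals exactly $\frac{P(C=1\mid d,x,q) - \beta_{x(d)}}{\alpha_{x(d)}}$; taking the outer expectation over $x$ and equating with $R_{d|q}$ yields Eq.~\ref{eq:counter:clickprobaffine}.

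The one gap I would close carefully is the degenerate case $f(1,x(d)) = f(0,x(d))$, for which $\alpha_{x(d)}$ is undefined. Here I would argue that such a context contributes a constant to the estimate that is independent of both the click and the relevance, so if every context were degenerate the estimate would be a fixed number and could not equal $R_{d|q}$ for all relevance values; hence the relevance signal must be carried by the non-degenerate contexts, for which the affine parameters are well-defined (alternatively, one simply restricts attention to contexts with $f(1,x(d)) \neq f(0,x(d))$). Everything else is routine --- expanding a Bernoulli expectation and renaming two constants --- so beyond that edge case the proof is short.
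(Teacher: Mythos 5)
Your proof is correct and follows essentially the same route as the paper's: reduce unbiasedness to $\mathbb{E}_{c,x}[f(c(d),x(d))\mid q]=R_{d|q}$ via Lemma~\ref{lemma:cfest:expected}, expand the Bernoulli expectation over the click to get $P(C=1\mid d,x,q)(f(1,x(d))-f(0,x(d)))+f(0,x(d))$, and define $\alpha_{x(d)}$ and $\beta_{x(d)}$ by exactly the same formulas. Your explicit treatment of the degenerate case $f(1,x(d))=f(0,x(d))$ is a small but genuine improvement over the paper's proof, which leaves $\alpha_{x(d)}$ undefined there without comment.
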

\begin{proof}
From Lemma~\ref{lemma:cfest:expected} it follows that the unbiasedness criteria for a counterfactual relevance estimate can be reformulated as:
\begin{equation}
\mathbb{E}_{c,x}\mleft[ \hat{R}_{d|q} \mright] = R_{d|q} \longleftrightarrow
\mathbb{E}_{c,x}\mleft[ f(c(d), x({d})) \mid q \mright] = R_{d|q}.
\label{eq:countercrit1}
\end{equation}
The expected value can be rewritten to:
\begin{align}
&\mathbb{E}_{c,x}\mleft[ f(c(d), x({d})) \mid q \mright]
\label{eq:expectrewrit}\\&
=
\mathbb{E}_{x}[ 
P(C =1 \mid d, x, q)f(1, x({d}))
\nonumber\\
&\hspace{3.3cm} + (1 - P(C =1 \mid d, x, q))f(0, x({d}))
\mid q]
\nonumber\\&
=
\mathbb{E}_{x}\mleft[ 
P(C =1 \mid d, x, q)(f(1, x({d})) - f(0, x({d}))) + f(0, x({d}))
\mid q\mright].
\nonumber
\end{align}
Combining Eq.~\ref{eq:countercrit1} and~\ref{eq:expectrewrit} reveals that the unbiasedness criteria also has implications on the click probability: 
\begin{align}
\mathbb{E}_{c,x}\big[ \hat{R}_{d|q} \big] = R_{d|q} \longleftrightarrow
\mathbb{E}_{x}&\big[
P(C =1 \mid d, x, q)(f(1, x({d}))
\label{eq:theorem55lasteq}
\\&
- f(0, x({d}))) + f(0, x({d}))
\mid q\big]
=
R_{d|q}.
\nonumber
\end{align}
To prove Theorem~\ref{theorem:counterestimateunbiased}, we derive the following $\alpha_{x(d)}$ and $\beta_{x(d)}$  from Eq.~\ref{eq:theorem55lasteq} which
show that click probability is an affine transformation of the form stated in Eq.~\ref{eq:counter:clickprobaffine}:
\begin{align}
\alpha_{x(d)} &= (f(1, x({d})) - f(0, x({d})))^{-1},
\label{eq:alphabetavalueproof}
\\
\beta_{x(d)} &= -f(0, x({d}))(f(1, x({d})) - f(0, x({d})))^{-1}.\qedhere
\end{align}
\end{proof}
Theorem~\ref{theorem:cf} can be difficult to interpret since the expectation over the display context in Eq.~\ref{eq:counter:clickprobaffine} depends on the choice of logging policy.
Nonetheless, Theorem~\ref{theorem:cf} clearly shows that - in spite of being broad and generic - Definition~\ref{def:counterfactualestimate} implicitly limits the click models counterfactual relevance estimation can be unbiased or consistent for.
It appears that this is a result of the estimate being a mean over individually transformed clicks (Eq.~\ref{eq:counterfactualestimate}) which makes its expected value a linear interpolation between the possible $f(0, x(d))$ and $f(1, x(d))$ values.
Consequently, an counterfactual estimator following Definition~\ref{def:counterfactualestimate} can only be unbiased or consistent if clicks follow a transformation that can be corrected by such an interpolation, Theorem~\ref{theorem:cf} proves that such a transformation has to match the affine form of Eq.~\ref{eq:counter:clickprobaffine}.

To better understand what Theorem~\ref{theorem:cf} entails, we consider what it would mean for a deterministic logging policy:
\begin{corollary}
\label{corollary:cfnorandom}
If an item is displayed without randomization then a counterfactual estimator can only be unbiased if the item's click probability is an affine transformation of relevance in the form:
\begin{equation}
P(x(d) \mid q) = 1 \rightarrow 
P(C =1 \mid d, x, q) = \alpha_{x(d)} R_{d|q} + \beta_{x(d)}.
\label{eq:counter:clickprobaffinenorandom}
\end{equation}
\end{corollary}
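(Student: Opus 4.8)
The plan is to specialize Theorem~\ref{theorem:cf} to the case where the conditional display-context distribution degenerates to a point mass. First I would make precise what ``displayed without randomization'' means: if item $d$ is shown deterministically given the query, then by Definition~\ref{def:displaycontext} its display context $x(d)$ is a deterministic function of $(d,q)$, so there is a single context value, call it $x$, with $P(x(d) = x \mid q) = 1$. Consequently every logged interaction with $d$ places it in the same context $x$, and the expectation $\mathbb{E}_{x}[\,\cdot \mid q\,]$ that appears throughout Section~\ref{sec:limitcount} is taken with respect to this point mass.

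Next I would simply evaluate that expectation. Starting from the unbiasedness characterization of Theorem~\ref{theorem:cf}, Eq.~\ref{eq:counter:clickprobaffine}, the outer expectation over $x$ collapses onto the single context $x$, yielding
$$R_{d|q} = \frac{P(C = 1 \mid d, x, q) - \beta_{x(d)}}{\alpha_{x(d)}}.$$
Here $\alpha_{x(d)}$ and $\beta_{x(d)}$ are exactly the quantities constructed in the proof of Theorem~\ref{theorem:cf} from the transformation function $f$ (Eq.~\ref{eq:alphabetavalueproof}); they depend only on $f$, not on the logging policy, so the deterministic-policy assumption leaves them unchanged. Multiplying through by $\alpha_{x(d)}$ and rearranging gives precisely the claimed affine form $P(C = 1 \mid d, x, q) = \alpha_{x(d)} R_{d|q} + \beta_{x(d)}$, which establishes the corollary. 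Equivalently, one can bypass Theorem~\ref{theorem:cf} and argue directly from Eq.~\ref{eq:theorem55lasteq}: with the point-mass expectation the left-hand side becomes $P(C = 1 \mid d, x, q)\,(f(1, x(d)) - f(0, x(d))) + f(0, x(d)) = R_{d|q}$, and substituting the definitions of $\alpha_{x(d)}$ and $\beta_{x(d)}$ recovers the same conclusion.

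There is very little genuine difficulty here; the corollary is essentially Theorem~\ref{theorem:cf} read off at a single point, and the main work is conceptual rather than computational. The one place that warrants a sentence of care is the division by $\alpha_{x(d)}$: this is legitimate because $\alpha_{x(d)} = (f(1,x(d)) - f(0,x(d)))^{-1}$ is by construction a well-defined nonzero number. If instead $f(1,x(d)) = f(0,x(d))$, then $\mathbb{E}_{c,x}[f(c(d),x(d)) \mid q]$ would not depend on the click probability at all and hence could not equal $R_{d|q}$ except in the degenerate case of constant relevance, so no unbiased estimator of the form in Definition~\ref{def:counterfactualestimate} exists and the hypothesis of the corollary is vacuous. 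The only other subtlety, already dispatched in the first step, is translating the informal phrase ``without randomization'' into the precise statement that $P(x(d)\mid q)$ is a point mass.
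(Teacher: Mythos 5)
Your proposal is correct and follows exactly the route the paper takes: the paper's entire proof is ``this follows directly from Theorem~\ref{theorem:cf},'' and your argument is precisely the spelled-out version of that, collapsing the expectation in Eq.~\ref{eq:counter:clickprobaffine} to a point mass and rearranging. Your added remark on why division by $\alpha_{x(d)}$ is legitimate (and what happens when $f(1,x(d)) = f(0,x(d))$) is a sensible bit of extra care that the paper leaves implicit.
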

\begin{proof}
This follows directly from Theorem~\ref{theorem:cf}.
\end{proof}
Furthermore, often practitioners have no control over the exact logging policy that gathers data.
In such a scenario, unbiasedness guarantees cannot rely on a specific choice of logging policy as the deployment of that policy may not be possible:
\begin{corollary}
\label{corollary:cfstrong}
Without control over the logging policy, unbiasedness guarantees for a counterfactual estimator are only possible if click probabilities are affine transformations of relevance in the form:
\begin{equation}
 \forall (d, x, q), \;
P(C =1 \mid d, x, q) = \alpha_{x(d)} R_{d|q} + \beta_{x(d)}.
\label{eq:counter:clickprobaffinestrong}
\end{equation}
\end{corollary}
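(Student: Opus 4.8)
The plan is to obtain Corollary~\ref{corollary:cfstrong} as an immediate strengthening of Corollary~\ref{corollary:cfnorandom} by quantifying over logging policies. The phrase ``without control over the logging policy'' should be read as the following requirement: the transformation function $f$ --- and hence, by Eq.~\ref{eq:alphabetavalueproof}, the induced coefficients $\alpha_{x(d)}$ and $\beta_{x(d)}$, which depend only on the display context and not on the policy --- must be fixed in advance, yet the resulting estimator must remain unbiased no matter which admissible logging policy is deployed to collect $\mathcal{D}_q$. In particular, the unbiasedness guarantee must survive the worst case over policies, so it must hold in particular for the most concentrated policies, namely the deterministic ones, which is exactly the regime covered by Corollary~\ref{corollary:cfnorandom}.

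First I would fix an arbitrary triple $(d, x, q)$. By Definition~\ref{def:displaycontext}, the display context $x(d)$ is determined entirely by how $d$ is presented and carries no information about $R_{d|q}$; consequently nothing prevents a logging policy from deterministically placing $d$ in display context $x$ whenever query $q$ is issued. Let $\pi$ denote such a policy. Under $\pi$ we have $P(x(d) \mid q) = 1$, so Corollary~\ref{corollary:cfnorandom} applies to $\pi$ and yields $P(C = 1 \mid d, x, q) = \alpha_{x(d)} R_{d|q} + \beta_{x(d)}$. Since the same $f$ --- and therefore the same $\alpha_{x(d)}$ and $\beta_{x(d)}$ --- is used for every policy, and since $(d, x, q)$ was arbitrary, this affine identity must hold for all $(d, x, q)$, which is precisely Eq.~\ref{eq:counter:clickprobaffinestrong}.

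The main obstacle here is conceptual rather than computational: one must pin down precisely what ``without control over the logging policy'' licenses, and in particular justify that every display context of interest is realizable by some admissible deterministic policy --- this is where the insistence in Definition~\ref{def:displaycontext} that $x(d)$ be relevance-independent does the real work, since it is what guarantees that such a policy is well-defined in the first place. A secondary point worth spelling out in the write-up is the degenerate case $\alpha_{x(d)} = 0$, i.e.\ $f(1, x(d)) = f(0, x(d))$: then the estimate ignores the click in context $x$ entirely, and Theorem~\ref{theorem:cf} already excludes this unless relevance is constant, so we may assume $f(1, x(d)) \neq f(0, x(d))$ and the expressions in Eq.~\ref{eq:alphabetavalueproof} are well-defined. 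With these two points addressed, the corollary follows from Corollary~\ref{corollary:cfnorandom} by this quantify-then-specialize argument.
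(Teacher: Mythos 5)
Your proposal is correct and follows essentially the same route as the paper: it reduces Corollary~\ref{corollary:cfstrong} to Corollary~\ref{corollary:cfnorandom} by observing that, absent control over logging, one cannot exclude a (deterministic) policy that always displays a given item in a given context, so the affine form must hold for every $(d,x,q)$. Your additional remarks on the role of Definition~\ref{def:displaycontext} and the degenerate case $f(1,x(d))=f(0,x(d))$ merely make explicit what the paper leaves implicit.
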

\begin{proof}
We cannot rule out the possibility that for each possible display context an item exists which the logging policy will only display in that context, thus, Cor.~\ref{corollary:cfstrong} follows from Cor.~\ref{corollary:cfnorandom}.
\end{proof}
While the individual steps of our theoretical derivation are very straightforward, the result is quite significant:
we have proven that the form of click-based counterfactual estimators entail implicit assumptions about the click behavior they are able to unbias.
In other words, because click-based counterfactual estimators transform individual click or non-click signals and then average the result, they can only correct for click probabilities that are affine transformations of relevances.
Conversely, this proves that no unbiased counterfactual relevance estimator (Def.~\ref{def:counterfactualestimate}) can exist for click behavior that does not match the affine transformation of Eq.~\ref{eq:counter:clickprobaffine}.

Furthermore, there is a less obvious implicit requirement that $\alpha_{x(d)}$ and $\beta_{x(d)}$ should only depend on the display context and not on the item relevance:
\begin{corollary}
\label{corollary:cf}
If two items are always displayed in the same single display context, their $\alpha_{x(d)}$ and $\beta_{x(d)}$ values should be equal:
\begin{equation}
\begin{split}
&\big(\exists x', \, P(x(d) = x' \mid q ) = 1 \land P(x(d') = x' \mid q ) = 1 \big) 
\\&\hspace{1.2cm} \longrightarrow
P\big(\alpha_{x(d)} = \alpha_{x(d')} \land \beta_{x(d)} = \beta_{x(d')} \mid q \big) = 1.
\end{split}
\end{equation}
\end{corollary}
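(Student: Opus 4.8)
The plan is to read the result straight off the explicit parametrisation of $\alpha$ and $\beta$ obtained in the proof of Theorem~\ref{theorem:cf}, Eq.~\ref{eq:alphabetavalueproof}. The structural fact that does the work is already built into Definition~\ref{def:counterfactualestimate}: the transformation $f$ takes only the click indicator and the display context as arguments, with no access to the item $d$ or its relevance $R_{d|q}$. Hence $f(1,x')$ and $f(0,x')$ are single, well-defined numbers attached to the context $x'$, not to whichever item happens to be shown there, and the same is therefore true of the $\alpha$ and $\beta$ values that Eq.~\ref{eq:alphabetavalueproof} builds out of them.

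Concretely, I would proceed as follows. First, since $P(x(d) = x' \mid q) = 1$ and $P(x(d') = x' \mid q) = 1$, I invoke Corollary~\ref{corollary:cfnorandom} for each of $d$ and $d'$ in turn: unbiasedness forces the affine form of Theorem~\ref{theorem:cf} with parameters $\alpha_{x(d)} = (f(1, x(d)) - f(0, x(d)))^{-1}$ and $\beta_{x(d)} = -f(0, x(d))(f(1, x(d)) - f(0, x(d)))^{-1}$, and likewise for $d'$. Second, I substitute $x(d) = x'$ and $x(d') = x'$ into these expressions; the right-hand sides then involve only $f(1,x')$ and $f(0,x')$, so they evaluate to identical numbers, giving $\alpha_{x(d)} = \alpha_{x(d')}$ and $\beta_{x(d)} = \beta_{x(d')}$. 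Third, since the events $x(d) = x'$ and $x(d') = x'$ each have probability one conditioned on $q$, so does their intersection, and on that event the two parameter equalities hold; this yields $P(\alpha_{x(d)} = \alpha_{x(d')} \land \beta_{x(d)} = \beta_{x(d')} \mid q) = 1$. An equally short route bypasses Corollary~\ref{corollary:cfnorandom} and works directly from Lemma~\ref{lemma:cfest:expected}: with $x(d) = x'$ almost surely, $\mathbb{E}[\hat{R}_{d|q}] = P(C = 1 \mid d, x', q)\,(f(1,x') - f(0,x')) + f(0,x')$, and equating this with $R_{d|q}$ and solving for $P(C = 1 \mid d, x', q)$ reproduces Eq.~\ref{eq:alphabetavalueproof} at context $x'$, the same for both items.

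The arithmetic is trivial; the real content is the scoping argument, which is where I expect the only genuine care to be needed. One must state explicitly that Definition~\ref{def:counterfactualestimate} genuinely forbids $f$ from depending on $d$ or $R_{d|q}$ — if it did not, the corollary would simply be false — so the statement should be read as a claim about what that definition rules out: a counterfactual estimator cannot apply item-specific, relevance-dependent correction factors, hence items that always share a display context must be debiased with identical affine parameters. The one edge case to dispatch is the degenerate one where $f(1,x') = f(0,x')$, so that the estimate is constant in the clicks and $\alpha_{x'}$ is undefined; but then $\mathbb{E}[\hat{R}_{d|q}]$ cannot equal two distinct relevances $R_{d|q}$ and $R_{d'|q}$ at once, so under the standing assumption that the estimator is unbiased this case does not arise and the parametrisation of Eq.~\ref{eq:alphabetavalueproof} is well-defined.
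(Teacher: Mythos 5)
Your proposal is correct and follows essentially the same route as the paper's own proof: both rest on the observation that Definition~\ref{def:counterfactualestimate} makes $f$ a function of the display context alone, so two items sharing context $x'$ share the values $f(0,x')$ and $f(1,x')$, and hence the $\alpha$ and $\beta$ read off from Eq.~\ref{eq:alphabetavalueproof} via Corollary~\ref{corollary:cfnorandom} coincide. Your explicit dispatch of the degenerate case $f(1,x')=f(0,x')$ is a small addition the paper omits, but it does not change the argument.
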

\begin{proof} 
According to Definition~\ref{def:displaycontext}, $x(d)$ is not determined by the relevance of $d$.
Therefore, two items with the same display context $x(d)$ should also have the same values from $f$:
\begin{equation}
\begin{split}
&\big(x(d) = x(d')\big) \leftrightarrow \big(f(0, x(d)) = f(0, x(d'))
\\
&\hspace{3.1cm} \land f(1, x(d)) = f(1, x(d'))\big).
\end{split}
\end{equation}
Cor.~\ref{corollary:cfnorandom} and
Eq.~\ref{eq:alphabetavalueproof} show $\alpha_{x(d)}$ and $\beta_{x(d)}$ must then be equal.
\end{proof}
This requirement becomes more obvious when one considers the implications of its negation: a setting where the value of $\alpha_{x(d)}$ and $\beta_{x(d)}$ are determined by $R_{d|q}$.
In such a setting the value of $f(0, x(d))$ and $f(1, x(d))$ are also determined by $R_{d|q}$, and therefore, one should first know the relevance of an item before being able to unbiasedly estimate it.
In other words, in such a setting one can only unbiasedly estimate relevances if one already knows those relevances and thus has no need for estimation.
Importantly, we are not arguing that such settings do not exist but merely that unbiased counterfactual estimation is infeasible in such cases.

In addition to the requirement stated in Theorem~\ref{theorem:cf} and Corollaries~\ref{corollary:cfnorandom}, \ref{corollary:cfstrong} and~\ref{corollary:cf}, there could be more requirements that have yet to be proven.
In particular, our analysis has not considered under what conditions the values of $f$ or $\alpha$ and $\beta$ can be determined.
Previous work on position-bias estimation~\citep{wang2018position, agarwal2019estimating, craswell2008experimental, chuklin2015click} shows that bias estimation is far from a trivial task.
Therefore, it seems reasonable that it is also necessary that the bias parameters $\alpha$ and $\beta$ can be accurately inferred  for unbiased counterfactual estimation.
However, we currently lack a theoretical approach to prove broad conditions about bias estimation in general (Section~\ref{sec:biasestimationcm} will discuss bias estimation with click modelling).

To summarize our theoretical findings on counterfactual estimation for click-based \ac{LTR}:
We have provided a generic definition of click-based counterfactual estimators that captures almost all methods in the field~\citep{wang2016learning, joachims2017unbiased, oosterhuis2020topkrankings, vardasbi2020trust, oosterhuis2020thesis, agarwal2019counterfactual, agarwal2019addressing, oosterhuis2021onlinecounterltr, ai2018unbiased, wu2021unbiased, vardasbi2020cascade}.
From this definition, we have shown that estimators of this form can only provide unbiased and consistent estimates when click probabilities are affine transformations of relevances.
We have thus identified a significant limitation of the high-level approach for click-based counterfactual estimation: it can never find unbiased or consistent methods for non-affine click behavior.

The remainder of this section will provide some example scenarios that illustrate this limitation and discuss some edge-cases related to our definitions.

\begin{figure}[tb]
\centering
{\renewcommand*{\arraystretch}{0.4}
\begin{tabular}{r c}
\rotatebox[origin=c]{90}{
\hspace{-1cm}
\small
Click Probability
\hspace{-4cm}
}
& \includegraphics[scale=0.475]{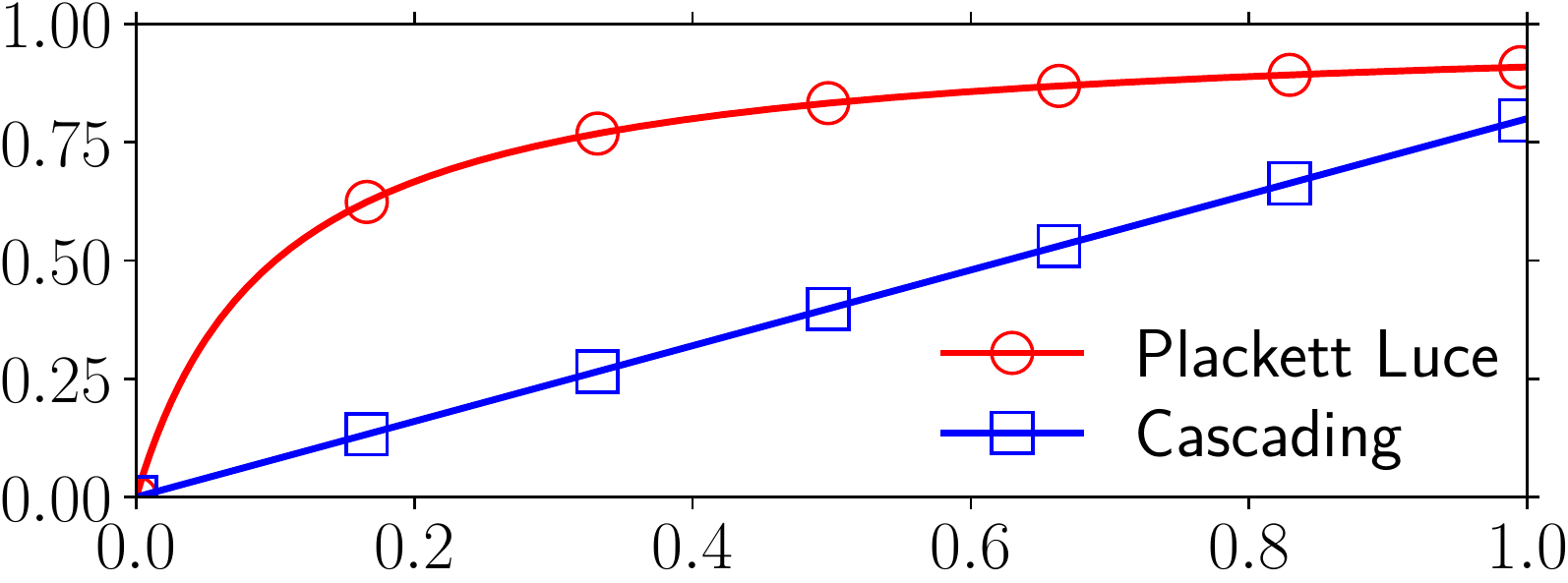}\\
\multicolumn{2}{c}{\hspace{0.5cm}
\small
Item Relevance
}
\end{tabular}
}
\caption{
The relation between click probability and relevance in the Plackett-Luce and cascading scenarios of Section~\ref{sec:cfscenario}.
}
\vspace{-0.5\baselineskip}
\label{fig:cfexample}
\end{figure}

\subsection{Example Scenarios}
\label{sec:cfscenario}

To better understand what the implicit limitations of counterfactual estimation regarding affine click-behavior entail,
we will briefly discuss two models of click behavior that illustrate scenarios where unbiased estimation is possible and impossible.

To start, we will consider a model of behavior where a user first considers all items in a ranking and then chooses according to a Plackett-Luce decision model:
\begin{equation}
P_\text{plackett-luce}(c(d) = 1 \mid y, q) = \frac{\mathds{1}[d \in y] \cdot R_{d|q}}{\sum_{d' \in y} R_{d'|q}}.
\end{equation}
Thus the probability of a user choosing an item in a ranking is equal to its relevance divided by the sum of the relevances of all items in the ranking.
This Plackett-Luce click model is based on well-established decision models from the economic field~\citep{plackett1975analysis, luce2012individual}.
To keep our example simple, we have not added any form of position-bias or trust-bias in the model, although such extensions are certainly possible.

At first glance the Plackett-Luce click model may incorrectly seem to fall within the limitations of counterfactual estimation with the following parameters:
$\alpha_{x(d)} = \mathds{1}[d \in y]/\sum_{d' \in y}  R_{d'|q}$ and $\beta_{x(d)} = 0$ (cf.\  Eq.~\ref{eq:counter:clickprobaffinenorandom}).
However, the denominator of this $\alpha_{x(d)}$ is directly determined by the relevances of \emph{all} items in the ranking, thus also by the item for which relevance is estimated itself.
It therefore does not match the definition of an affine click-model.
This can be clearly seen when plotting how the click probability changes with an item's relevance; Figure~\ref{fig:cfexample} displays this for a scenario where the other relevances that sum up to $0.01$: $\sum_{d' \in y / d} R_{d'|q} = 0.01$.
Clearly, there is not an affine transformation from relevance to click probabilities, and consequently, it is impossible for a click-based counterfactual estimator according to Definition~\ref{def:counterfactualestimate} to provide an unbiased or consistent estimate in this scenario.

This does not mean that unbiased counterfactual estimation is impossible for all click models where click probabilities depend on the relevances of other items.
For example, let us consider a classic cascading click model where a user considers one item at a time until one is clicked~\citep{chuklin2015click, craswell2008experimental}.
The click probability is thus:
\begin{equation}
\begin{split}
&P_\text{casc}(c(d) = 1 \,|\, d, y, q) 
\\
&\hspace{0.5cm} = P_\text{casc}(\forall d' \in y, \, k(d') < k(d) \rightarrow c(d') = 0 \,|\, y, q) \cdot R_{d|q}.
\end{split}
\end{equation}
Therefore. this scenario does provide an affine transformation with $\alpha_{x(d)} = P_\text{casc}(\forall d' \in y, \, k(d') < k(d) \rightarrow c(d') = 0 |\, y, q)$ -- the probability that no earlier item was clicked -- and $\beta_{x(d)} = 0$ (cf.\  Eq.~\ref{eq:counter:clickprobaffinenorandom}).
Figure~\ref{fig:cfexample} also shows how the click probability changes with relevance in a scenario where $\alpha_{x(d)} =0.7$, which clearly reveals an affine transformation.
The crucial difference in this scenario is that the value of $\alpha_{x(d)}$ only depends on other items and not on $d$ itself.

These examples illustrate an intuitive way to think of Theorem~\ref{theorem:cf}:
unbiasedness and consistency guarantees for counterfactual estimation are only possible if there is a linear relation between click probability and item relevance.

\subsection{Adaptive Normalization and Clipping}

While our definition of counterfactual relevance estimates (Definition~\ref{def:counterfactualestimate}) covers almost all click-based counterfactual estimators, it does not fit perfectly with adaptive self-normalization~\citep{swaminathan2015self} and clipping strategies that change with $|\mathcal{D}_q|$~\citep{oosterhuis2021onlinecounterltr, oosterhuis2020topkrankings}.
For instance, a popular clipping strategy is one where $\lim_{|\mathcal{D}_q| \to \infty} \tau = 0$, e.g.\ with $\tau \propto 1/\sqrt{|\mathcal{D}_q|}$~\citep{oosterhuis2021onlinecounterltr, oosterhuis2020topkrankings}.
To incorporate these exceptions one could use the function: 
$f(c_i(d),\allowbreak x_{i}(d),\allowbreak |\mathcal{D}_q|)$
in Definition~\ref{def:counterfactualestimate}.
This change would mean unbiasedness now has to be proven for all possible $|\mathcal{D}_q|$ values, and consistency only requires unbiasedness in the limit, i.e.\ $\lim_{|\mathcal{D}_q| \to \infty} \mathbb{E}[\hat{R}_{d|q}] = R_{d|q}$.
In spite of these differences, Theorem~\ref{theorem:cf} would only need minor changes and the main point would still hold: unbiasedness and consistency are only possible under affine click behavior.
We argue that the simplicity of our theoretical analysis with Definition~\ref{def:counterfactualestimate} and the lack of significant implications of including adaptive strategies, justify our choice of excluding them.

\section{The Limitations of Click Modelling}
\label{sec:limitclickmodel}

The second branch of click-based \ac{LTR} is click-modelling~\citep{chuklin2015click, borisov2016neural, zhuang2021cross, chapelle2009dynamic}; as described in Section~\ref{sec:clickmodel} click modelling methods fit a predictive click model to a dataset of observed user interactions.
Part of this fitted model parameters represent relevance, thus the click model relevance estimates $\hat{R}_{q}$ are variables that are optimized, in contrast with counterfactual estimation where they are the output of the estimator function (see Section~\ref{sec:limitcount}).
We will define a click modelling method by the loss that they optimize:
\begin{definition}
\label{def:clickmodelloss}
A click model loss is a function $\mathcal{L}$ that takes as input the vector of relevance estimates $\hat{R}_{q}$, the estimated latent variables $\hat{Z}$ and the observed data $\mathcal{D}_q$: $\mathcal{L}(\hat{R}_{q},\hat{Z},\mathcal{D}_q)$.
The value of $\mathcal{L}(\hat{R}_{q},\hat{Z},\mathcal{D}_q)$ indicates the quality of both the relevance and latent estimates where minimizing indicates improvement. 
\end{definition}
Our definition is intentionally as generic as possible and - to the best of our knowledge - covers virtually all click modelling methods used for click-based \ac{LTR}~\citep{chuklin2015click, borisov2016neural, zhuang2021cross, chen2020context, chapelle2009dynamic, borisov2016context, borisov2018click, wang2018position}.
Usually, there is a model of user behavior based around the latent variables $\hat{Z}$ and relevances  $\hat{R}_{q}$ and the loss represents the predictive power of that model to describe the data $\mathcal{D}_q$.
In traditional click modelling work, this model is generally a graphical model with a clear interpretation~\citep{chapelle2009dynamic, borisov2016context}, e.g.\ the rank-based position-biased model has a single latent variable per rank representing user examination~\citep{chuklin2015click, craswell2008experimental}.
More recent work has introduced neural networks for click modelling where the latent variables are the parameters of the networks, these models provide no intuitive interpretation but they have enormous predictive power~\citep{borisov2016neural, chen2020context, borisov2018click, zhuang2021cross}.
Our definition covers both traditional and neural click modelling methods by avoiding assumptions about the underlying predictive model.

While in the context of click models, relevance estimates are parameters to be optimized, a click modelling method still has to produce relevance estimates to be used for \ac{LTR}.
We define the output of a click modelling method as the optimal relevance estimates:
\begin{definition}
\label{def:clickmodelestimate}
The optimal relevance estimate of a click model are any relevance estimates that minimize the click model loss:
\begin{equation}
\big(\hat{R}_{q}^*, \hat{Z}^*\big) =
\arg\min_{\hat{R}_{q},\hat{Z}} \mathcal{L}(\hat{R}_{q},\hat{Z},\mathcal{D}_q).
\end{equation}
\end{definition}
We note that this definition is intentionally ambiguous in that it allows for the possibility that multiple relevance estimates could minimize the loss.
Depending on the exact loss and optimization method, it is both possible that there is a single unique vector of optimal estimate values or that multiple values are optimal.

With these definitions, we can now analyse the consistency and unbiasedness of click modelling methods.
To start, we prove the following conditions for consistency:
\begin{theorem}
\label{theorem:cmcons}
A click model estimator is consistent if and only if the only relevance estimates that minimize its loss are the true relevances as $|\mathcal{D}_q|$ tends to infinity:
\begin{equation}
\lim_{|\mathcal{D}_q| \to \infty}\hspace{-0.2cm}
\big(
R_{q} = \hat{R}_q \leftrightarrow 
\min_{\hat{Z}} \mathcal{L}(\hat{R}_{q},\hat{Z},\mathcal{D}_q) = \min_{\hat{R}_q', \hat{Z}} \mathcal{L}(\hat{R}_{q}',\hat{Z},\mathcal{D}_q)
\big).
\label{eq:cmestimatereq}
\end{equation}
\end{theorem}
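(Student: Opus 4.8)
The plan is to unfold the two governing definitions and show that the claimed equivalence is, once the ambiguity in Definition~\ref{def:clickmodelestimate} is made explicit, essentially a restatement of Definition~\ref{def:consistency}. First I would write $M(\mathcal{D}_q)$ for the set of relevance vectors appearing in some loss-minimizing pair, i.e.\ the projection onto the $\hat{R}_q$-coordinates of $\argmin_{\hat{R}_q,\hat{Z}}\mathcal{L}(\hat{R}_q,\hat{Z},\mathcal{D}_q)$, and observe that by Definition~\ref{def:clickmodelestimate} the estimate returned by the method is an arbitrary element $\hat{R}_q^* \in M(\mathcal{D}_q)$, with the tie-breaking left unspecified. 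Hence the method is consistent (Definition~\ref{def:consistency}) exactly when $\lim_{|\mathcal{D}_q|\to\infty}\hat{R}_q^* = R_q$ holds for \emph{every} admissible choice of minimizer at each sample size: if it held only for some choices, an adversarial tie-breaking rule would break consistency.

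I would then argue the two directions. For the $\Leftarrow$ direction (Eq.~\ref{eq:cmestimatereq} implies consistency): assuming that in the limit the only relevance vector minimizing the loss is $R_q$, and since the output is by construction a minimizer, the output must coincide with $R_q$ in the limit, which is exactly Definition~\ref{def:consistency}. For the $\Rightarrow$ direction (consistency implies Eq.~\ref{eq:cmestimatereq}): assuming the output converges to $R_q$ no matter which minimizer is selected, the ``only if'' half of the bracketed biconditional --- every limiting minimizer equals $R_q$ --- is immediate, because any sequence $\hat{R}_q^{(n)}\in M(\mathcal{D}_q)$ is a legitimate sequence of outputs and so must converge to $R_q$; and the ``if'' half --- that $R_q$ is itself a minimizer in the limit --- follows by exhibiting any such sequence of minimizers, which both lies in $M(\mathcal{D}_q)$ and tends to $R_q$, so that $R_q$ attains the limiting minimum.

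The main obstacle is not a calculation but fixing a precise reading of the notation: the expression $\lim_{|\mathcal{D}_q|\to\infty}(\,\cdots\,)$ wrapping a logical biconditional has to be interpreted as ``the sequence of loss-minimizing relevance sets $M(\mathcal{D}_q)$ converges to the singleton $\{R_q\}$'', together with the standing assumption that $\mathcal{L}$ attains its infimum so that $M(\mathcal{D}_q)$ is nonempty (otherwise ``the output'' is undefined). I would state this reading explicitly before the proof. With it in place, the remaining work is the short symmetric chain of implications above; the only point requiring care is that the argument must hold uniformly over the choice of minimizer, which is precisely where the deliberate ambiguity of Definition~\ref{def:clickmodelestimate} carries the weight of the theorem.
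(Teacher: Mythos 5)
Your proposal is correct and follows essentially the same route as the paper's own proof: unfold Definitions~\ref{def:consistency} and~\ref{def:clickmodelestimate}, note that the method may return any loss-minimizing relevance vector, and conclude that consistency holds exactly when the set of minimizers collapses to the true relevances in the limit (the paper states this as the two one-directional implications and combines them). You are in fact slightly more careful than the paper, since you explicitly prove both directions of the outer equivalence and flag the needed reading of a limit wrapping a biconditional, whereas the paper only argues necessity and leaves sufficiency implicit.
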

\begin{proof}
From Definition~\ref{def:consistency} and~\ref{def:clickmodelestimate}, we see that the true relevance estimates need to minimize the click loss in order to be a possible optimal estimate:
\begin{equation}
\lim_{|\mathcal{D}_q| \to \infty}\hspace{-0.2cm}
\big(
R_{q} = \hat{R}_q \rightarrow 
\min_{\hat{Z}} \mathcal{L}(\hat{R}_{q},\hat{Z},\mathcal{D}_q) = \min_{\hat{R}_q', \hat{Z}} \mathcal{L}(\hat{R}_{q}',\hat{Z},\mathcal{D}_q)
\big).
\label{eq:cmconproof1}
\end{equation}
However, Definition~\ref{def:clickmodelestimate} also reveals that multiple relevance estimates may be valid.
Thus to guarantee that the correct values are estimated, we also have to exclude other possible values that minimize the loss:
\begin{equation}
\lim_{|\mathcal{D}_q| \to \infty}\hspace{-0.2cm}
\big(
R_{q} = \hat{R}_q \leftarrow 
\min_{\hat{Z}} \mathcal{L}(\hat{R}_{q},\hat{Z},\mathcal{D}_q) = \min_{\hat{R}_q', \hat{Z}} \mathcal{L}(\hat{R}_{q}',\hat{Z},\mathcal{D}_q)
\big).
\label{eq:cmconproof2}
\end{equation}
Combining Equation~\ref{eq:cmconproof1} and~\ref{eq:cmconproof2} directly proves Theorem~\ref{theorem:cmcons}.
\end{proof}
In other words, Theorem~\ref{theorem:cmcons} proves that a click modelling method is consistent if - in the limit of infinite data - its loss is \emph{only} minimized by the true relevance values.
Moreover, it also proves that these are the only conditions under which it can be consistent.
Importantly, these conditions relate to both the click modelling method, i.e.\ the loss and the underlying predictive model, and the data collection procedure,
since it is both the loss function and the data that determine where minimum values of the loss are.
Thus, while Theorem~\ref{theorem:cmcons} provides a limitation that is conceptually straightforward: the true relevance values should provide the only loss minimum; it appears difficult to understand when this condition is met in practice.

Finally, we prove the following unbiasedness condition:
\begin{theorem}
\label{theorem:unbiascm}
A click modelling method is unbiased if and only if the expected value of its optimal relevance estimates are equal to the true relevances:
\begin{equation}
\mathbb{E}_{\mathcal{D}_q}\big[\hat{R}_{q}^*\big] = R_{q} \longleftrightarrow \forall d, \; \mathbb{E}_{\mathcal{D}_q}\big[\hat{R}_{d|q}^*\big] = R_{d|q}.
\end{equation}
\end{theorem}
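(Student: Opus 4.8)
The plan is to obtain Theorem~\ref{theorem:unbiascm} directly from the definitions: it is essentially the observation that unbiasedness of a vector-valued estimate coincides with simultaneous unbiasedness of all of its coordinates. First I would recall that, by Definitions~\ref{def:clickmodelloss} and~\ref{def:clickmodelestimate}, the relevance estimate that a click modelling method outputs for query $q$ is the optimal estimate $\hat{R}_{q}^*$, whose $d$-th coordinate is $\hat{R}_{d|q}^*$, a function of the random data $\mathcal{D}_q$. Substituting $\hat{R}_{d|q}^*$ for $\hat{R}_{d|q}$ in Definition~\ref{def:unbiasedness} (with the expectation there being the expectation over $\mathcal{D}_q$) shows that ``the click modelling method is unbiased'' is \emph{by definition} the statement $\forall d,\; \mathbb{E}_{\mathcal{D}_q}[\hat{R}_{d|q}^*] = R_{d|q}$, i.e. the right-hand side of the claimed biconditional. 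It therefore remains only to show this is equivalent to the vector identity $\mathbb{E}_{\mathcal{D}_q}[\hat{R}_{q}^*] = R_{q}$ on the left-hand side.

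For that equivalence I would use that the expectation of a finite-dimensional vector-valued random variable is taken coordinatewise: the $d$-th entry of $\mathbb{E}_{\mathcal{D}_q}[\hat{R}_{q}^*]$ is exactly $\mathbb{E}_{\mathcal{D}_q}[\hat{R}_{d|q}^*]$, and the $d$-th entry of $R_{q}$ is $R_{d|q}$. Since two vectors are equal precisely when they agree in every coordinate, both directions follow at once: $\mathbb{E}_{\mathcal{D}_q}[\hat{R}_{q}^*] = R_{q}$ holds if and only if $\mathbb{E}_{\mathcal{D}_q}[\hat{R}_{d|q}^*] = R_{d|q}$ for every $d$. No further machinery is required, mirroring how Theorem~\ref{theorem:cmcons} was reduced to combining two one-directional implications.

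The only point that needs care --- and the closest thing to an obstacle --- is that Definition~\ref{def:clickmodelestimate} deliberately permits the loss $\mathcal{L}$ to have several minimizers, so $\hat{R}_{q}^*$ need not be a uniquely determined function of $\mathcal{D}_q$; strictly speaking one should fix a measurable selection among the minimizers, or equivalently read the theorem as a statement about whichever optimal estimate the method returns. Because exactly the same $\hat{R}_{q}^*$ occurs on both sides of the biconditional, this choice is immaterial to the coordinatewise argument. I would also note the implicit regularity assumption that the expectations involved are finite, which is the sole condition needed for the vector expectation to be well defined; with that in place the argument above completes the proof.
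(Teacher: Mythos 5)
Your proposal is correct and matches the paper's approach: the paper's own proof is the one-line remark that the result ``follows from Definitions~\ref{def:unbiasedness} and~\ref{def:clickmodelestimate},'' and your coordinatewise unpacking simply makes that explicit. Your added remark about non-unique minimizers is a reasonable caveat but not something the paper addresses.
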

\begin{proof}
Follows from the Definition~\ref{def:unbiasedness} and~\ref{def:clickmodelestimate}.
\end{proof}
Admittedly, the condition proved in Theorem~\ref{theorem:unbiascm} is extremely straightforward: the expected value of the estimates that minimize the loss should be equal to the true relevance values.
Similar to the consistency conditions of Theorem~\ref{theorem:cmcons}, this conditions relates to both the loss function of the click model and to the process that gathers the data $\mathcal{D}_q$.
Furthermore, while the unbiasedness condition is also conceptually straightforward, it is again hard to understand in practical terms.

To summarize our theoretical findings on click modelling for click-based \ac{LTR}:
We have provided a very general definition of click modelling methods that covers virtually all existing methods in the field~\citep{chuklin2015click, borisov2016neural, zhuang2021cross, chen2020context, chapelle2009dynamic, borisov2016context, borisov2018click, wang2018position}.
From this definition, we derived the condition for consistency: correct relevance estimates need to provide the only minimum of the loss in the limit of infinite data;
and the condition for unbiasedness: the expected value of the estimates that minimize the loss need to be equal to the correct relevance estimates.
While these conditions are conceptually straightforward, it heavily depends on the exact model and data collection procedure whether they are met.
As a result, it seems there is no straightforward general way to determine whether a click modelling method is consistent or unbiased in practice.

On a higher level, it thus appears that click modelling cannot provide robust guarantees on unbiasedness or consistency, but simultaneously, we were unable to find clear and substantial limitations regarding the click behavior they can debias, as we have found for counterfactual estimation.
Moreover, our theoretical findings contrast with the recent empirical success of this approach that showed it to be an effective approach for click-based \ac{LTR}.
We thus conclude that the main limitation of click modelling methods is that we currently lack robust theoretical guarantees regarding their unbiasedness or consistency.

\subsection{Example Scenario}

We found that the conditions we derived for the unbiasedness and consistency of click modelling methods (Theorem~\ref{theorem:unbiascm} and~\ref{theorem:cmcons}) are conceptually straightforward, but they also heavily depend on the exact model and data collection procedure.
To illustrate this dependency, we will consider the following example scenario that consists of two rankings with the following click probabilities:
\vspace{0.2\baselineskip}
\begin{tabular}{l c c c c | c c c c}
& \multicolumn{4}{c}{Ranking 1} & \multicolumn{4}{c}{Ranking 2} \\
\multicolumn{1}{ l |}{Items} & A & B & C & D & B & A & D & C \\
\multicolumn{1}{ l |}{Click Prob.} & 0.90 & 0.64 & 0.40 & 0.05 & 0.8 & 0.72 & 0.20 & 0.10 
\end{tabular}
Given all the click probabilities, one could fit the commonly used rank-based position bias model~\citep{chuklin2015click, craswell2008experimental} to this scenario: $P(c(d) \mid d, k, q) = \alpha_k R_{d|q}$ with $\alpha_1 = 1$.
Clearly, the result would follow $\alpha_2 = 0.8$, $\alpha_3 = 4\cdot\alpha_4$, $R_{A} = 0.9$, $R_{B} = 0.8$ and $R_{C} = 2\cdot R_{D}$.
Thus, while the values for $R_A$ and $R_B$ are determined exactly, $R_C$ and $R_D$ are merely constrained and a multitude of values would fit the data equally well.
We can thus conclude that this click model does not provide consistent estimates for the relevances in this scenario.

Our example scenario shows the difficulty in determining whether a click model is consistent.
In this scenario, even when all click probabilities and one bias parameter are known, we are unable to determine all document relevances.
Importantly, if we had an additional ranking that placed item C or D at one of the first two ranks this would be possible, illustrating consistency depends both on the model and how data is gathered.
We have not considered unbiasedness because it requires an extensive amount of assumptions and derivation: including a probabilistic model of what rankings are displayed and how many interactions are logged.
Finally, we note that the click model of our example scenario is extremely simple compared to recent neural click models~\citep{borisov2016neural, chen2020context, borisov2018click, zhuang2021cross}, we expect that analysing unbiasedness or consistency conditions  for such complex models is even more infeasible.
 
\subsection{Bias Estimation With Click Modelling}
\label{sec:biasestimationcm}

While our discussion has focused on click modelling for relevance estimation, it is also often applied to estimate bias parameters for counterfactual estimation~\citep{wang2016learning, vardasbi2020trust, oosterhuis2020taking, wang2018position}.
While a full discussion of this application falls outside the scope of this work, we note that our methodology in Section~\ref{sec:limitclickmodel} can be applied analogously to bias estimation.
This would lead to similar a conclusion: for unbiased and consistent estimation of the bias parameters, only the correct vector of values should minimize the loss in expectation and in the limit.
Correspondingly, in spite of its empirical success, it thus seems that click modelling currently lacks any robust theoretical guarantees for bias estimation.

\section{Other Click-Based LTR Methods}
\label{sec:othermethods}

We have attempted to cover as many existing click-based \ac{LTR} methods in our discussion as possible.
Nonetheless, we will now briefly discuss several methods that fall outside of our definitions of counterfactual estimation and click modelling.

Firstly, we have not considered the multitude of online \ac{LTR} methods~\citep{hofmann2013thesis, yue2009interactively, schuth2016mgd, raman2013stable}, because several earlier works have already determined they are not able to provide theoretical guarantees w.r.t.\ unbiasedness~\citep{oosterhuis2020thesis, ai2021unbiased, oosterhuis2021onlinecounterltr, oosterhuis2019optimizing}.
Secondly, \citeauthor{ovaisi2020correcting} propose using Heckman corrections~\citep{heckman1979sample}, commonly used in econometrics, to the click-based \ac{LTR} problem~\citep{ovaisi2021propensity, ovaisi2020correcting}.
Unfortunately, this approach fell outside the scope of this paper due it being so different from the rest of the click-based \ac{LTR} field.
Lastly, there are several click-based pairwise \ac{LTR} works: some were excluded because they do not aim to be unbiased~\citep{wang2021non, joachims2002optimizing}.
A notable exception is the pairwise method by \citet{saito2020unbiased} which does not match our \ac{LTR} problem setting, but does rely on counterfactual relevance estimation, making our analysis in Section~\ref{sec:limitcount} still applicable.
Finally, there is the \emph{Unbiased LambdaMart} method by \citet{hu2019unbiased} which is unbiased but not due to its method, the next section will address this interesting case separately.

\subsection{Unbiased LambdaMart: Trivially Unbiased}

\citet{hu2019unbiased} propose \emph{Pairwise Debiasing} that aims to estimate a pairwise loss, based on the assumption that per rank there is a static ratio between click probabilities and relevances across queries:
\begin{definition}
\emph{Pairwise Debiasing Assumption.}
Per position $k$ there are two ratios $t^{+}_{k} \in \mathbb{R}_{>0}$ and $t^{-}_{k} \in \mathbb{R}_{>0}$ between click and non-click probabilities and relevances when any item is displayed at position $k$:
\begin{align}
P_\text{pair}(C = 1 \mid d, k, q) &= t^{+}_{k} R_{d|q}, \label{eq:pairclickprob}\\
P_\text{pair}(C = 0 \mid d, k, q) &= t^{-}_{k} \big(1-R_{d|q}\big). \label{eq:pairnoclickprob}
\end{align}
\end{definition}
\citet{ai2021unbiased} pointed out that since Eq.~\ref{eq:pairclickprob} is equivalent to the popular rank-based position-bias model~\citep{joachims2017unbiased, wang2018position, craswell2008experimental, chuklin2015click}, the additional Eq.~\ref{eq:pairnoclickprob} seems to conflict with that model.
We will now prove that the pairwise debiasing assumption actually implicitly assumes that there is no bias at all in non-trivial ranking circumstances:
\begin{theorem}
If two items $d_1$ and $d_2$ are both displayed at position $k$ for two different queries $q_1$ and $q_2$ and they have different relevances s.t.\ $R_{d_1|q_1} \neq R_{d_2|q_2}$
then under the pairwise debiasing assumption click probabilities at $k$ are equal to the corresponding relevances:
\begin{align}
P_\text{pair}(C = 1 \mid d, k, q) &= R_{d|q}, \label{eq:pairproof1}\\
P_\text{pair}(C = 0 \mid d, k, q) &= 1 - R_{d|q}. \label{eq:pairproof2}
\end{align}
\end{theorem}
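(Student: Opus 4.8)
The plan is to exploit the one fact the two defining equations of the pairwise debiasing assumption have not yet used: since $C \in \{0,1\}$, the probabilities $P_\text{pair}(C = 1 \mid d, k, q)$ and $P_\text{pair}(C = 0 \mid d, k, q)$ must sum to one. Adding Eq.~\ref{eq:pairclickprob} and Eq.~\ref{eq:pairnoclickprob} and rearranging gives, for \emph{every} item $d$ displayed at position $k$ in any query $q$,
\begin{equation*}
(t^{+}_{k} - t^{-}_{k})\, R_{d|q} = 1 - t^{-}_{k}.
\end{equation*}
The crucial observation is that the right-hand side depends only on $k$, so the left-hand side must too; this is what forces the bias at $k$ to be trivial as soon as two distinct relevances can occur there.

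Next I would instantiate this identity twice — once for $d_1$ at $q_1$ and once for $d_2$ at $q_2$, both displayed at position $k$ — and subtract, obtaining $(t^{+}_{k} - t^{-}_{k})(R_{d_1|q_1} - R_{d_2|q_2}) = 0$. By the hypothesis $R_{d_1|q_1} \neq R_{d_2|q_2}$ the second factor is nonzero, hence $t^{+}_{k} = t^{-}_{k}$. Substituting this back into the displayed identity yields $0 = 1 - t^{-}_{k}$, i.e.\ $t^{-}_{k} = 1$, and therefore $t^{+}_{k} = 1$ as well; both values lie in $\mathbb{R}_{>0}$, so they are admissible.

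Finally, plugging $t^{+}_{k} = t^{-}_{k} = 1$ into Eq.~\ref{eq:pairclickprob} and Eq.~\ref{eq:pairnoclickprob} gives exactly Eq.~\ref{eq:pairproof1} and Eq.~\ref{eq:pairproof2}; and since \emph{any} item displayed at $k$ in \emph{any} query satisfies the same identity, the conclusion holds for all such $(d, q)$, not merely for $d_1$ and $d_2$. The argument is elementary linear algebra, so there is no genuine technical obstacle; the only subtlety — and the conceptual heart of the result — is recognizing that the normalization constraint $P_\text{pair}(C = 1 \mid \cdot) + P_\text{pair}(C = 0 \mid \cdot) = 1$, combined with the per-position static-ratio assumption, over-determines $R_{d|q}$ unless the ratios collapse to the no-bias case. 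I would therefore be careful to spell out the normalization step explicitly, since it is the one ingredient not written into the definition itself.
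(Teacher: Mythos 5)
Your proof is correct and follows essentially the same route as the paper's: both hinge on the normalization constraint that Eq.~\ref{eq:pairclickprob} and Eq.~\ref{eq:pairnoclickprob} must sum to one, instantiated at the two items so that the hypothesis $R_{d_1|q_1} \neq R_{d_2|q_2}$ forces $t^{-}_{k} = t^{+}_{k} = 1$. Your rearrangement to $(t^{+}_{k} - t^{-}_{k})R_{d|q} = 1 - t^{-}_{k}$ is marginally cleaner than the paper's (which solves for $t^{+}_{k}$ and hence implicitly divides by the relevances), but the argument is the same.
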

\begin{proof}
Because the click probability (Eq.~\ref{eq:pairclickprob}) and non-click probability (Eq.~\ref{eq:pairnoclickprob}) have to sum to one, the following must hold:
\begin{equation}
t^{+}_{k} R_{d_1|q_1}  = 1 - t^{-}_{k} \big(1-R_{d_1|q_1}\big).
\end{equation}
From this we can express $t^{+}_{k}$ in terms of $t^{-}_{k}$ and $R_{d_1|q_1}$, similarly, this can also be done with $t^{-}_{k}$ and $R_{d_2|q_2}$, therefore:
\begin{equation}
t^{+}_{k}   = \frac{1 - t^{-}_{k} \big(1-R_{d_1|q_1}\big)}{R_{d_1|q_1}}
= \frac{1 - t^{-}_{k} \big(1-R_{d_2|q_2}\big)}{R_{d_2|q_2}}.
\end{equation}
From the latter part of that equality, we can derive:
\begin{equation}
 t^{-}_{k}\big(R_{d_1|q_1}  - R_{d_2|q_2}\big)
= R_{d_1|q_1}  - R_{d_2|q_2} .
\end{equation}
Since $R_{d_1|q_1}  \neq R_{d_2|q_2}$: $t^{-}_{k} = 1$ proving Eq.~\ref{eq:pairproof2}, in turn, since Eq.~\ref{eq:pairclickprob} and Eq.~\ref{eq:pairnoclickprob} sum to one: $t^{+}_{k} = 1$ thus also proving Eq.~\ref{eq:pairproof1}.
\end{proof}
Without discussing the details of \citeauthor{hu2019unbiased}'s \emph{Unbiased LambdaMart} method~\citep{hu2019unbiased}, we can already expect it to be unbiased, since its starting assumption already implicitly entails that clicks are unbiased w.r.t.\ relevance.
Therefore, its unbiasedness is completely trivial from a theoretical perspective, i.e.\ rankings according to \acp{CTR} are also unbiased under the pairwise debiasing assumption.

We would like to point out that there may still be practical value in pairwise debiasing, i.e.\ \citet{hu2019unbiased} report promising real-world performance improvements.
For our discussion regarding implicit limitations, it provides an illustrative example of why one should be critical when choosing or evaluating assumptions and the importance of investigating implicit limitations: for they can make the subsequent theoretical findings trivial and inconsequential.

\section{Discussion and Conclusion:\\ The Future of Click-Based LTR}

We have critically analyzed the theoretical foundations of the main branches of unbiased \ac{LTR} for implicit assumptions and the limitations they entail, and came to the following conclusions:
\begin{enumerate*} [label=(\roman*)]
\item Counterfactual estimation can only unbiasedly and consistently estimate relevance when click probabilities follow affine transformations of relevances.
\item The conditions for click modelling methods to be unbiased and consistent do not translate to robust theoretical guarantees;
we currently lack strong guarantees for complex models, i.e.\ the best performing neural networks~\citep{zhuang2021cross, borisov2016neural}.
\item Assumptions that assert constant ratios between clicks and relevance as well as non-clicks and non-relevance~\citep{hu2019unbiased} implicitly assume that clicks are unbiased w.r.t.\ relevance.
\end{enumerate*}
Our findings have revealed significant limitations of the unbiased \ac{LTR} approach:
\begin{enumerate*} [label=(\roman*)]
\item There is a clear limit on the click behaviors that the current counterfactual estimation approach could possibly correct for.
\item The theoretical assumptions of pairwise debiasing~\citep{hu2019unbiased} are not applicable to biased clicks.
\item The current unbiased \ac{LTR} theory lacks the ability to properly analyze the unbiasedness and consistency of click modelling methods.
\end{enumerate*}
These limitations do not undermine the value of the unbiased \ac{LTR} field: its usefulness and effectiveness is very evident by a multitude empirical results~\citep{zhuang2021cross, wang2016learning, wang2018position, joachims2017unbiased, zhuang2021cross, agarwal2019addressing},
but could help guide future directions and provide valuable lessons to the field.

The implicit limitations that we have uncovered for the existing approach, reveal that in order to find unbiased counterfactual estimation methods that are unbiased and consistent w.r.t.\ non-affine click behavior,
future research should differentiate from our generic definition (Definition~\ref{def:counterfactualestimate}).
Concretely, such methods should not rely on averaging over transformed individual clicks, and as a result, proving unbiasedness would be much more difficult for them.
Furthermore, novel theoretical research is needed for finding theoretical guarantees of unbiasedness and consistency for click modelling methods, in particular, those that utilize complex models.
Our findings reveal that it is not enough that the underlying click model of a method matches the real-world click behavior; whether the method will produce unbiased or consistent relevance estimates also depends on how data is gathered, and ultimately, where the minima of the resulting loss are.
Similarly, theoretical research that explores in what circumstances bias estimation is feasible would be very valuable for further understanding the theoretical guarantees of counterfactual estimation.
Lastly, future research that introduces novel assumptions should critically investigate what these assumptions implicitly entail.
One should actively avoid assumptions that - intentionally or incidentially - make learning from clicks a trivial problem.

Finally, the findings of our critical analysis also provide some important lessons for the field:
Firstly, we should realize that unbiasedness is not always possible.
Accordingly, we should thus not invariably expect nor require it from future work, for this could systematically exclude research that tackles novel problem settings (e.g.\ non-affine click behavior).
Secondly, since unbiasedness may not be a realistic long term goal, the field will likely shift to bias mitigation or partial debiasing as feasible future directions.
Correspondingly, we recommend replacing the term \emph{unbiased \ac{LTR}} with the less-demanding \emph{debiased \ac{LTR}} or the neutral \emph{click-based \ac{LTR}}.

\begin{acks}
This work was supported by the Google Research Scholar Program.
All content represents the opinion of the author, which is not necessarily shared or endorsed by their employers and/or sponsors.
\end{acks}

\balance
\bibliographystyle{ACM-Reference-Format}
\bibliography{references}


\begin{thebibliography}{75}


\ifx \showCODEN    \undefined \def \showCODEN     #1{\unskip}     \fi
\ifx \showDOI      \undefined \def \showDOI       #1{#1}\fi
\ifx \showISBNx    \undefined \def \showISBNx     #1{\unskip}     \fi
\ifx \showISBNxiii \undefined \def \showISBNxiii  #1{\unskip}     \fi
\ifx \showISSN     \undefined \def \showISSN      #1{\unskip}     \fi
\ifx \showLCCN     \undefined \def \showLCCN      #1{\unskip}     \fi
\ifx \shownote     \undefined \def \shownote      #1{#1}          \fi
\ifx \showarticletitle \undefined \def \showarticletitle #1{#1}   \fi
\ifx \showURL      \undefined \def \showURL       {\relax}        \fi
\providecommand\bibfield[2]{#2}
\providecommand\bibinfo[2]{#2}
\providecommand\natexlab[1]{#1}
\providecommand\showeprint[2][]{arXiv:#2}

\bibitem[Agarwal et~al\mbox{.}(2019a)]%
        {agarwal2019counterfactual}
\bibfield{author}{\bibinfo{person}{Aman Agarwal}, \bibinfo{person}{Kenta
  Takatsu}, \bibinfo{person}{Ivan Zaitsev}, {and} \bibinfo{person}{Thorsten
  Joachims}.} \bibinfo{year}{2019}\natexlab{a}.
\newblock \showarticletitle{A General Framework for Counterfactual
  Learning-to-Rank}. In \bibinfo{booktitle}{\emph{Proceedings of the 42nd
  International ACM SIGIR Conference on Research \& Development in Information
  Retrieval}}. ACM, \bibinfo{pages}{5--14}.
\newblock


\bibitem[Agarwal et~al\mbox{.}(2019b)]%
        {agarwal2019addressing}
\bibfield{author}{\bibinfo{person}{Aman Agarwal}, \bibinfo{person}{Xuanhui
  Wang}, \bibinfo{person}{Cheng Li}, \bibinfo{person}{Michael Bendersky}, {and}
  \bibinfo{person}{Marc Najork}.} \bibinfo{year}{2019}\natexlab{b}.
\newblock \showarticletitle{Addressing Trust Bias for Unbiased
  Learning-to-Rank}. In \bibinfo{booktitle}{\emph{The World Wide Web
  Conference}}. ACM, \bibinfo{pages}{4--14}.
\newblock


\bibitem[Agarwal et~al\mbox{.}(2019c)]%
        {agarwal2019estimating}
\bibfield{author}{\bibinfo{person}{Aman Agarwal}, \bibinfo{person}{Ivan
  Zaitsev}, \bibinfo{person}{Xuanhui Wang}, \bibinfo{person}{Cheng Li},
  \bibinfo{person}{Marc Najork}, {and} \bibinfo{person}{Thorsten Joachims}.}
  \bibinfo{year}{2019}\natexlab{c}.
\newblock \showarticletitle{Estimating Position Bias without Intrusive
  Interventions}. In \bibinfo{booktitle}{\emph{Proceedings of the Twelfth ACM
  International Conference on Web Search and Data Mining}}. ACM,
  \bibinfo{pages}{474--482}.
\newblock


\bibitem[Agichtein et~al\mbox{.}(2006)]%
        {agichtein2006learning}
\bibfield{author}{\bibinfo{person}{Eugene Agichtein}, \bibinfo{person}{Eric
  Brill}, \bibinfo{person}{Susan Dumais}, {and} \bibinfo{person}{Robert
  Ragno}.} \bibinfo{year}{2006}\natexlab{}.
\newblock \showarticletitle{Learning User Interaction Models for Predicting Web
  Search Result Preferences}. In \bibinfo{booktitle}{\emph{Proceedings of the
  29th annual international ACM SIGIR conference on Research and development in
  information retrieval}}. \bibinfo{pages}{3--10}.
\newblock


\bibitem[Ai et~al\mbox{.}(2018)]%
        {ai2018unbiased}
\bibfield{author}{\bibinfo{person}{Qingyao Ai}, \bibinfo{person}{Keping Bi},
  \bibinfo{person}{Cheng Luo}, \bibinfo{person}{Jiafeng Guo}, {and}
  \bibinfo{person}{W~Bruce Croft}.} \bibinfo{year}{2018}\natexlab{}.
\newblock \showarticletitle{Unbiased Learning to Rank with Unbiased Propensity
  Estimation}. In \bibinfo{booktitle}{\emph{Proceedings of the 41st
  International ACM SIGIR conference on Research and Development in Information
  Retrieval}}. ACM, \bibinfo{pages}{385--394}.
\newblock


\bibitem[Ai et~al\mbox{.}(2021)]%
        {ai2021unbiased}
\bibfield{author}{\bibinfo{person}{Qingyao Ai}, \bibinfo{person}{Tao Yang},
  \bibinfo{person}{Huazheng Wang}, {and} \bibinfo{person}{Jiaxin Mao}.}
  \bibinfo{year}{2021}\natexlab{}.
\newblock \showarticletitle{Unbiased Learning to Rank: Online or Offline?}
\newblock \bibinfo{journal}{\emph{ACM Transactions on Information Systems
  (TOIS)}} \bibinfo{volume}{39}, \bibinfo{number}{2} (\bibinfo{year}{2021}),
  \bibinfo{pages}{1--29}.
\newblock


\bibitem[Biega et~al\mbox{.}(2018)]%
        {biega2018equity}
\bibfield{author}{\bibinfo{person}{Asia~J Biega}, \bibinfo{person}{Krishna~P
  Gummadi}, {and} \bibinfo{person}{Gerhard Weikum}.}
  \bibinfo{year}{2018}\natexlab{}.
\newblock \showarticletitle{Equity of Attention: Amortizing Individual Fairness
  in Rankings}. In \bibinfo{booktitle}{\emph{The 41st International ACM SIGIR
  Conference on Research \& Development in Information Retrieval}}.
  \bibinfo{pages}{405--414}.
\newblock


\bibitem[Borisov et~al\mbox{.}(2016a)]%
        {borisov2016context}
\bibfield{author}{\bibinfo{person}{Alexey Borisov}, \bibinfo{person}{Ilya
  Markov}, \bibinfo{person}{Maarten de Rijke}, {and} \bibinfo{person}{Pavel
  Serdyukov}.} \bibinfo{year}{2016}\natexlab{a}.
\newblock \showarticletitle{A Context-aware Time Model for Web Search}. In
  \bibinfo{booktitle}{\emph{Proceedings of the 39th International ACM SIGIR
  conference on Research and Development in Information Retrieval}}.
  \bibinfo{pages}{205--214}.
\newblock


\bibitem[Borisov et~al\mbox{.}(2016b)]%
        {borisov2016neural}
\bibfield{author}{\bibinfo{person}{Alexey Borisov}, \bibinfo{person}{Ilya
  Markov}, \bibinfo{person}{Maarten De~Rijke}, {and} \bibinfo{person}{Pavel
  Serdyukov}.} \bibinfo{year}{2016}\natexlab{b}.
\newblock \showarticletitle{A Neural Click Model for Web Search}. In
  \bibinfo{booktitle}{\emph{Proceedings of the 25th International Conference on
  World Wide Web}}. \bibinfo{pages}{531--541}.
\newblock


\bibitem[Borisov et~al\mbox{.}(2018)]%
        {borisov2018click}
\bibfield{author}{\bibinfo{person}{Alexey Borisov}, \bibinfo{person}{Martijn
  Wardenaar}, \bibinfo{person}{Ilya Markov}, {and} \bibinfo{person}{Maarten de
  Rijke}.} \bibinfo{year}{2018}\natexlab{}.
\newblock \showarticletitle{A Click Sequence Model for Web Search}. In
  \bibinfo{booktitle}{\emph{The 41st International ACM SIGIR Conference on
  Research \& Development in Information Retrieval}}. \bibinfo{pages}{45--54}.
\newblock


\bibitem[Burges(2010)]%
        {burges2010ranknet}
\bibfield{author}{\bibinfo{person}{Christopher~J.C. Burges}.}
  \bibinfo{year}{2010}\natexlab{}.
\newblock \bibinfo{booktitle}{\emph{From RankNet to LambdaRank to LambdaMART:
  An Overview}}.
\newblock \bibinfo{type}{{T}echnical {R}eport} MSR-TR-2010-82.
  \bibinfo{institution}{Microsoft}.
\newblock


\bibitem[Chapelle and Chang(2011)]%
        {Chapelle2011}
\bibfield{author}{\bibinfo{person}{Olivier Chapelle} {and} \bibinfo{person}{Yi
  Chang}.} \bibinfo{year}{2011}\natexlab{}.
\newblock \showarticletitle{{Yahoo! Learning to Rank Challenge Overview}}.
\newblock \bibinfo{journal}{\emph{Journal of Machine Learning Research}}
  \bibinfo{volume}{14} (\bibinfo{year}{2011}), \bibinfo{pages}{1--24}.
\newblock


\bibitem[Chapelle and Zhang(2009)]%
        {chapelle2009dynamic}
\bibfield{author}{\bibinfo{person}{Olivier Chapelle} {and} \bibinfo{person}{Ya
  Zhang}.} \bibinfo{year}{2009}\natexlab{}.
\newblock \showarticletitle{A Dynamic Bayesian Network Click Model for Web
  Search Ranking}. In \bibinfo{booktitle}{\emph{Proceedings of the 18th
  International Conference on World Wide Web}}. \bibinfo{pages}{1--10}.
\newblock


\bibitem[Chen et~al\mbox{.}(2020)]%
        {chen2020context}
\bibfield{author}{\bibinfo{person}{Jia Chen}, \bibinfo{person}{Jiaxin Mao},
  \bibinfo{person}{Yiqun Liu}, \bibinfo{person}{Min Zhang}, {and}
  \bibinfo{person}{Shaoping Ma}.} \bibinfo{year}{2020}\natexlab{}.
\newblock \showarticletitle{A Context-Aware Click Model for Web Search}. In
  \bibinfo{booktitle}{\emph{Proceedings of the 13th International Conference on
  Web Search and Data Mining}}. \bibinfo{pages}{88--96}.
\newblock


\bibitem[Chuklin et~al\mbox{.}(2015)]%
        {chuklin2015click}
\bibfield{author}{\bibinfo{person}{Aleksandr Chuklin}, \bibinfo{person}{Ilya
  Markov}, {and} \bibinfo{person}{Maarten de Rijke}.}
  \bibinfo{year}{2015}\natexlab{}.
\newblock \bibinfo{booktitle}{\emph{Click Models for Web Search}}.
\newblock \bibinfo{publisher}{Morgan \& Claypool Publishers}.
\newblock


\bibitem[Clarke et~al\mbox{.}(2008)]%
        {clarke2008novelty}
\bibfield{author}{\bibinfo{person}{Charles~LA Clarke},
  \bibinfo{person}{Maheedhar Kolla}, \bibinfo{person}{Gordon~V Cormack},
  \bibinfo{person}{Olga Vechtomova}, \bibinfo{person}{Azin Ashkan},
  \bibinfo{person}{Stefan B{\"u}ttcher}, {and} \bibinfo{person}{Ian
  MacKinnon}.} \bibinfo{year}{2008}\natexlab{}.
\newblock \showarticletitle{Novelty and Diversity in Information Retrieval
  Evaluation}. In \bibinfo{booktitle}{\emph{Proceedings of the 31st Annual
  International ACM SIGIR Conference on Research and Development in Information
  Retrieval}}. \bibinfo{pages}{659--666}.
\newblock


\bibitem[Covington et~al\mbox{.}(2016)]%
        {covington2016deep}
\bibfield{author}{\bibinfo{person}{Paul Covington}, \bibinfo{person}{Jay
  Adams}, {and} \bibinfo{person}{Emre Sargin}.}
  \bibinfo{year}{2016}\natexlab{}.
\newblock \showarticletitle{Deep Neural Networks for YouTube Recommendations}.
  In \bibinfo{booktitle}{\emph{Proceedings of the 10th ACM conference on
  recommender systems}}. \bibinfo{pages}{191--198}.
\newblock


\bibitem[Craswell et~al\mbox{.}(2008)]%
        {craswell2008experimental}
\bibfield{author}{\bibinfo{person}{Nick Craswell}, \bibinfo{person}{Onno
  Zoeter}, \bibinfo{person}{Michael Taylor}, {and} \bibinfo{person}{Bill
  Ramsey}.} \bibinfo{year}{2008}\natexlab{}.
\newblock \showarticletitle{An Experimental Comparison of Click Position-bias
  Models}. In \bibinfo{booktitle}{\emph{Proceedings of the 2008 International
  Conference on Web Search and Data Mining}}. ACM, \bibinfo{pages}{87--94}.
\newblock


\bibitem[Dang and Croft(2012)]%
        {dang2012diversity}
\bibfield{author}{\bibinfo{person}{Van Dang} {and} \bibinfo{person}{W~Bruce
  Croft}.} \bibinfo{year}{2012}\natexlab{}.
\newblock \showarticletitle{Diversity by proportionality: an election-based
  approach to search result diversification}. In
  \bibinfo{booktitle}{\emph{Proceedings of the 35th international ACM SIGIR
  conference on Research and development in information retrieval}}.
  \bibinfo{pages}{65--74}.
\newblock


\bibitem[Dato et~al\mbox{.}(2016)]%
        {dato2016fast}
\bibfield{author}{\bibinfo{person}{Domenico Dato}, \bibinfo{person}{Claudio
  Lucchese}, \bibinfo{person}{Franco~Maria Nardini}, \bibinfo{person}{Salvatore
  Orlando}, \bibinfo{person}{Raffaele Perego}, \bibinfo{person}{Nicola
  Tonellotto}, {and} \bibinfo{person}{Rossano Venturini}.}
  \bibinfo{year}{2016}\natexlab{}.
\newblock \showarticletitle{Fast Ranking with Additive Ensembles of Oblivious
  and Non-Oblivious Regression Trees}.
\newblock \bibinfo{journal}{\emph{ACM Transactions on Information Systems
  (TOIS)}} \bibinfo{volume}{35}, \bibinfo{number}{2} (\bibinfo{year}{2016}),
  \bibinfo{pages}{Article 15}.
\newblock


\bibitem[Diaz et~al\mbox{.}(2020)]%
        {diazevaluatingstochastic}
\bibfield{author}{\bibinfo{person}{Fernando Diaz}, \bibinfo{person}{Bhaskar
  Mitra}, \bibinfo{person}{Michael~D. Ekstrand}, \bibinfo{person}{Asia~J.
  Biega}, {and} \bibinfo{person}{Ben Carterette}.}
  \bibinfo{year}{2020}\natexlab{}.
\newblock \showarticletitle{Evaluating Stochastic Rankings with Expected
  Exposure}. In \bibinfo{booktitle}{\emph{Proceedings of the 29th ACM
  International Conference on Information \& Knowledge Management}}.
  \bibinfo{publisher}{Association for Computing Machinery},
  \bibinfo{address}{New York, NY, USA}, \bibinfo{pages}{275–284}.
\newblock
\showISBNx{9781450368599}


\bibitem[Fang et~al\mbox{.}(2019)]%
        {fang2019intervention}
\bibfield{author}{\bibinfo{person}{Zhichong Fang}, \bibinfo{person}{Aman
  Agarwal}, {and} \bibinfo{person}{Thorsten Joachims}.}
  \bibinfo{year}{2019}\natexlab{}.
\newblock \showarticletitle{Intervention Harvesting for Context-Dependent
  Examination-Bias Estimation}. In \bibinfo{booktitle}{\emph{Proceedings of the
  42nd International ACM SIGIR Conference on Research and Development in
  Information Retrieval}}. \bibinfo{pages}{825--834}.
\newblock


\bibitem[Gomez-Uribe and Hunt(2015)]%
        {gomez2015netflix}
\bibfield{author}{\bibinfo{person}{Carlos~A Gomez-Uribe} {and}
  \bibinfo{person}{Neil Hunt}.} \bibinfo{year}{2015}\natexlab{}.
\newblock \showarticletitle{The Netflix Recommender System: Algorithms,
  Business Value, and Innovation}.
\newblock \bibinfo{journal}{\emph{ACM Transactions on Management Information
  Systems (TMIS)}} \bibinfo{volume}{6}, \bibinfo{number}{4}
  (\bibinfo{year}{2015}), \bibinfo{pages}{1--19}.
\newblock


\bibitem[Heckman(1979)]%
        {heckman1979sample}
\bibfield{author}{\bibinfo{person}{James~J Heckman}.}
  \bibinfo{year}{1979}\natexlab{}.
\newblock \showarticletitle{Sample Selection Bias as a Specification Error}.
\newblock \bibinfo{journal}{\emph{Econometrica: Journal of the econometric
  society}} (\bibinfo{year}{1979}), \bibinfo{pages}{153--161}.
\newblock


\bibitem[Hofmann(2013)]%
        {hofmann2013thesis}
\bibfield{author}{\bibinfo{person}{Katja Hofmann}.}
  \bibinfo{year}{2013}\natexlab{}.
\newblock \emph{\bibinfo{title}{{Fast and Reliably Online Learning to Rank for
  Information Retrieval}}}.
\newblock \bibinfo{thesistype}{Ph.\,D. Dissertation}.
  \bibinfo{school}{University of Amsterdam}.
\newblock


\bibitem[Hofmann et~al\mbox{.}(2014)]%
        {hofmann2014effects}
\bibfield{author}{\bibinfo{person}{Katja Hofmann}, \bibinfo{person}{Anne
  Schuth}, \bibinfo{person}{Alejandro Bellogin}, {and}
  \bibinfo{person}{Maarten~de Rijke}.} \bibinfo{year}{2014}\natexlab{}.
\newblock \showarticletitle{Effects of Position Bias on Click-based Recommender
  Evaluation}. In \bibinfo{booktitle}{\emph{European Conference on Information
  Retrieval}}. Springer, \bibinfo{pages}{624--630}.
\newblock


\bibitem[Hu et~al\mbox{.}(2019)]%
        {hu2019unbiased}
\bibfield{author}{\bibinfo{person}{Ziniu Hu}, \bibinfo{person}{Yang Wang},
  \bibinfo{person}{Qu Peng}, {and} \bibinfo{person}{Hang Li}.}
  \bibinfo{year}{2019}\natexlab{}.
\newblock \showarticletitle{Unbiased LambdaMART: An Unbiased Pairwise
  Learning-to-Rank Algorithm}. In \bibinfo{booktitle}{\emph{The World Wide Web
  Conference}}. ACM, \bibinfo{pages}{2830--2836}.
\newblock


\bibitem[Hurley and Zhang(2011)]%
        {hurley2011novelty}
\bibfield{author}{\bibinfo{person}{Neil Hurley} {and} \bibinfo{person}{Mi
  Zhang}.} \bibinfo{year}{2011}\natexlab{}.
\newblock \showarticletitle{Novelty and Diversity in Top-n
  Recommendation--Analysis and Evaluation}.
\newblock \bibinfo{journal}{\emph{ACM Transactions on Internet Technology
  (TOIT)}} \bibinfo{volume}{10}, \bibinfo{number}{4} (\bibinfo{year}{2011}),
  \bibinfo{pages}{14}.
\newblock


\bibitem[Jagerman et~al\mbox{.}(2019)]%
        {jagerman2019comparison}
\bibfield{author}{\bibinfo{person}{Rolf Jagerman}, \bibinfo{person}{Harrie
  Oosterhuis}, {and} \bibinfo{person}{Maarten de Rijke}.}
  \bibinfo{year}{2019}\natexlab{}.
\newblock \showarticletitle{To Model or to Intervene: A Comparison of
  Counterfactual and Online Learning to Rank from User Interactions}. In
  \bibinfo{booktitle}{\emph{Proceedings of the 42nd International ACM SIGIR
  Conference on Research \& Development in Information Retrieval}}. ACM,
  \bibinfo{pages}{15--24}.
\newblock


\bibitem[J{\"a}rvelin and Kek{\"a}l{\"a}inen(2002)]%
        {jarvelin2002cumulated}
\bibfield{author}{\bibinfo{person}{Kalervo J{\"a}rvelin} {and}
  \bibinfo{person}{Jaana Kek{\"a}l{\"a}inen}.} \bibinfo{year}{2002}\natexlab{}.
\newblock \showarticletitle{Cumulated Gain-Based Evaluation of IR Techniques}.
\newblock \bibinfo{journal}{\emph{ACM Transactions on Information Systems
  (TOIS)}} \bibinfo{volume}{20}, \bibinfo{number}{4} (\bibinfo{year}{2002}),
  \bibinfo{pages}{422--446}.
\newblock


\bibitem[Joachims(2002)]%
        {joachims2002optimizing}
\bibfield{author}{\bibinfo{person}{Thorsten Joachims}.}
  \bibinfo{year}{2002}\natexlab{}.
\newblock \showarticletitle{Optimizing Search Engines Using Clickthrough Data}.
  In \bibinfo{booktitle}{\emph{Proceedings of the eighth ACM SIGKDD
  international conference on Knowledge discovery and data mining}}. ACM,
  \bibinfo{pages}{133--142}.
\newblock


\bibitem[Joachims et~al\mbox{.}(2017a)]%
        {joachims2017accurately}
\bibfield{author}{\bibinfo{person}{Thorsten Joachims}, \bibinfo{person}{Laura
  Granka}, \bibinfo{person}{Bing Pan}, \bibinfo{person}{Helene Hembrooke},
  {and} \bibinfo{person}{Geri Gay}.} \bibinfo{year}{2017}\natexlab{a}.
\newblock \showarticletitle{Accurately Interpreting Clickthrough Data as
  Implicit Feedback}. In \bibinfo{booktitle}{\emph{ACM SIGIR Forum}},
  Vol.~\bibinfo{volume}{51}. Acm New York, NY, USA, \bibinfo{pages}{4--11}.
\newblock


\bibitem[Joachims et~al\mbox{.}(2017b)]%
        {joachims2017unbiased}
\bibfield{author}{\bibinfo{person}{Thorsten Joachims}, \bibinfo{person}{Adith
  Swaminathan}, {and} \bibinfo{person}{Tobias Schnabel}.}
  \bibinfo{year}{2017}\natexlab{b}.
\newblock \showarticletitle{Unbiased Learning-to-Rank with Biased Feedback}. In
  \bibinfo{booktitle}{\emph{Proceedings of the Tenth ACM International
  Conference on Web Search and Data Mining}}. ACM, \bibinfo{pages}{781--789}.
\newblock


\bibitem[Komiyama et~al\mbox{.}(2015)]%
        {Komiyama2015}
\bibfield{author}{\bibinfo{person}{Junpei Komiyama}, \bibinfo{person}{Junya
  Honda}, {and} \bibinfo{person}{Hiroshi Nakagawa}.}
  \bibinfo{year}{2015}\natexlab{}.
\newblock \showarticletitle{Optimal Regret Analysis of Thompson Sampling in
  Stochastic Multi-armed Bandit Problem with Multiple Plays}. In
  \bibinfo{booktitle}{\emph{Proceedings of the 32nd International Conference on
  International Conference on Machine Learning - Volume 37}} (Lille, France)
  \emph{(\bibinfo{series}{ICML'15})}. \bibinfo{publisher}{JMLR.org},
  \bibinfo{pages}{1152--1161}.
\newblock


\bibitem[Lagr{\'e}e et~al\mbox{.}(2016)]%
        {lagree2016multiple}
\bibfield{author}{\bibinfo{person}{Paul Lagr{\'e}e}, \bibinfo{person}{Claire
  Vernade}, {and} \bibinfo{person}{Olivier Capp{\'e}}.}
  \bibinfo{year}{2016}\natexlab{}.
\newblock \showarticletitle{Multiple-Play Bandits in the Position-Based Model}.
  In \bibinfo{booktitle}{\emph{Advances in Neural Information Processing
  Systems}}. \bibinfo{pages}{1597--1605}.
\newblock


\bibitem[Liu(2009)]%
        {liu2009learning}
\bibfield{author}{\bibinfo{person}{Tie-Yan Liu}.}
  \bibinfo{year}{2009}\natexlab{}.
\newblock \showarticletitle{Learning to Rank for Information Retrieval}.
\newblock \bibinfo{journal}{\emph{Foundations and Trends in Information
  Retrieval}} \bibinfo{volume}{3}, \bibinfo{number}{3} (\bibinfo{year}{2009}),
  \bibinfo{pages}{225--331}.
\newblock


\bibitem[Luce(2012)]%
        {luce2012individual}
\bibfield{author}{\bibinfo{person}{R~Duncan Luce}.}
  \bibinfo{year}{2012}\natexlab{}.
\newblock \bibinfo{booktitle}{\emph{Individual Choice Behavior: A Theoretical
  Analysis}}.
\newblock \bibinfo{publisher}{Courier Corporation}.
\newblock


\bibitem[Moffat et~al\mbox{.}(2017)]%
        {moffat2017incorporating}
\bibfield{author}{\bibinfo{person}{Alistair Moffat}, \bibinfo{person}{Peter
  Bailey}, \bibinfo{person}{Falk Scholer}, {and} \bibinfo{person}{Paul
  Thomas}.} \bibinfo{year}{2017}\natexlab{}.
\newblock \showarticletitle{Incorporating User Expectations and Behavior into
  the Measurement of Search Effectiveness}.
\newblock \bibinfo{journal}{\emph{ACM Transactions on Information Systems
  (TOIS)}} \bibinfo{volume}{35}, \bibinfo{number}{3} (\bibinfo{year}{2017}),
  \bibinfo{pages}{1--38}.
\newblock


\bibitem[Moffat et~al\mbox{.}(2013)]%
        {moffat2013users}
\bibfield{author}{\bibinfo{person}{Alistair Moffat}, \bibinfo{person}{Paul
  Thomas}, {and} \bibinfo{person}{Falk Scholer}.}
  \bibinfo{year}{2013}\natexlab{}.
\newblock \showarticletitle{Users Versus Models: What Observation Tells us
  about Effectiveness Metrics}. In \bibinfo{booktitle}{\emph{Proceedings of the
  22nd ACM international conference on Information \& Knowledge Management}}.
  \bibinfo{pages}{659--668}.
\newblock


\bibitem[Morik et~al\mbox{.}(2020)]%
        {morik2020}
\bibfield{author}{\bibinfo{person}{Marco Morik}, \bibinfo{person}{Ashudeep
  Singh}, \bibinfo{person}{Jessica Hong}, {and} \bibinfo{person}{Thorsten
  Joachims}.} \bibinfo{year}{2020}\natexlab{}.
\newblock \bibinfo{booktitle}{\emph{Controlling Fairness and Bias in Dynamic
  Learning-to-Rank}}.
\newblock \bibinfo{publisher}{ACM}, \bibinfo{pages}{429–438}.
\newblock


\bibitem[Oosterhuis(2020)]%
        {oosterhuis2020thesis}
\bibfield{author}{\bibinfo{person}{Harrie Oosterhuis}.}
  \bibinfo{year}{2020}\natexlab{}.
\newblock \emph{\bibinfo{title}{Learning From User Interactions With Rankings:
  A Unification of the Field}}.
\newblock \bibinfo{thesistype}{Ph.\,D. Dissertation}.
  \bibinfo{school}{University of Amsterdam}.
\newblock


\bibitem[Oosterhuis(2021)]%
        {oosterhuis2021computationally}
\bibfield{author}{\bibinfo{person}{Harrie Oosterhuis}.}
  \bibinfo{year}{2021}\natexlab{}.
\newblock \showarticletitle{Computationally Efficient Optimization of
  Plackett-Luce Ranking Models for Relevance and Fairness}. In
  \bibinfo{booktitle}{\emph{Proceedings of the 44th International ACM SIGIR
  Conference on Research and Development in Information Retrieval}} (Virtual
  Event, Canada) \emph{(\bibinfo{series}{SIGIR '21})}.
  \bibinfo{publisher}{ACM}, \bibinfo{pages}{1023–1032}.
\newblock


\bibitem[Oosterhuis and de~Rijke(2019)]%
        {oosterhuis2019optimizing}
\bibfield{author}{\bibinfo{person}{Harrie Oosterhuis} {and}
  \bibinfo{person}{Maarten de Rijke}.} \bibinfo{year}{2019}\natexlab{}.
\newblock \showarticletitle{Optimizing Ranking Models in an Online Setting}. In
  \bibinfo{booktitle}{\emph{Advances in Information Retrieval}}.
  \bibinfo{publisher}{Springer International Publishing},
  \bibinfo{address}{Cham}, \bibinfo{pages}{382--396}.
\newblock
\showISBNx{978-3-030-15712-8}


\bibitem[Oosterhuis and de~Rijke(2020a)]%
        {oosterhuis2020topkrankings}
\bibfield{author}{\bibinfo{person}{Harrie Oosterhuis} {and}
  \bibinfo{person}{Maarten de Rijke}.} \bibinfo{year}{2020}\natexlab{a}.
\newblock \showarticletitle{Policy-Aware Unbiased Learning to Rank for Top-k
  Rankings}. In \bibinfo{booktitle}{\emph{Proceedings of the 43rd International
  ACM SIGIR Conference on Research and Development in Information Retrieval}}.
  \bibinfo{publisher}{ACM}, \bibinfo{pages}{489--498}.
\newblock


\bibitem[Oosterhuis and de~Rijke(2020b)]%
        {oosterhuis2020taking}
\bibfield{author}{\bibinfo{person}{Harrie Oosterhuis} {and}
  \bibinfo{person}{Maarten de Rijke}.} \bibinfo{year}{2020}\natexlab{b}.
\newblock \showarticletitle{Taking the Counterfactual Online: Efficient and
  Unbiased Online Evaluation for Ranking}. In
  \bibinfo{booktitle}{\emph{Proceedings of the 2020 International Conference on
  The Theory of Information Retrieval}}. ACM.
\newblock


\bibitem[Oosterhuis and de~Rijke(2021)]%
        {oosterhuis2021onlinecounterltr}
\bibfield{author}{\bibinfo{person}{Harrie Oosterhuis} {and}
  \bibinfo{person}{Maarten de Rijke}.} \bibinfo{year}{2021}\natexlab{}.
\newblock \showarticletitle{Unifying Online and Counterfactual Learning to
  Rank}. In \bibinfo{booktitle}{\emph{Proceedings of the 14th ACM International
  Conference on Web Search and Data Mining (WSDM'21)}}. ACM.
\newblock


\bibitem[Oosterhuis et~al\mbox{.}(2020)]%
        {oosterhuis2020unbiased}
\bibfield{author}{\bibinfo{person}{Harrie Oosterhuis}, \bibinfo{person}{Rolf
  Jagerman}, {and} \bibinfo{person}{Maarten de Rijke}.}
  \bibinfo{year}{2020}\natexlab{}.
\newblock \showarticletitle{Unbiased Learning to Rank: Counterfactual and
  Online Approaches}. In \bibinfo{booktitle}{\emph{Companion Proceedings of the
  Web Conference 2020}}. \bibinfo{pages}{299--300}.
\newblock


\bibitem[Ovaisi et~al\mbox{.}(2020)]%
        {ovaisi2020correcting}
\bibfield{author}{\bibinfo{person}{Zohreh Ovaisi}, \bibinfo{person}{Ragib
  Ahsan}, \bibinfo{person}{Yifan Zhang}, \bibinfo{person}{Kathryn Vasilaky},
  {and} \bibinfo{person}{Elena Zheleva}.} \bibinfo{year}{2020}\natexlab{}.
\newblock \showarticletitle{Correcting for Selection Bias in Learning-to-Rank
  Systems}. In \bibinfo{booktitle}{\emph{Proceedings of The Web Conference
  2020}}. \bibinfo{pages}{1863--1873}.
\newblock


\bibitem[Ovaisi et~al\mbox{.}(2021)]%
        {ovaisi2021propensity}
\bibfield{author}{\bibinfo{person}{Zohreh Ovaisi}, \bibinfo{person}{Kathryn
  Vasilaky}, {and} \bibinfo{person}{Elena Zheleva}.}
  \bibinfo{year}{2021}\natexlab{}.
\newblock \showarticletitle{Propensity-Independent Bias Recovery in Offline
  Learning-to-Rank Systems}. In \bibinfo{booktitle}{\emph{Proceedings of the
  44th International ACM SIGIR Conference on Research and Development in
  Information Retrieval}}. \bibinfo{pages}{1763--1767}.
\newblock


\bibitem[Pan et~al\mbox{.}(2007)]%
        {pan2007google}
\bibfield{author}{\bibinfo{person}{Bing Pan}, \bibinfo{person}{Helene
  Hembrooke}, \bibinfo{person}{Thorsten Joachims}, \bibinfo{person}{Lori
  Lorigo}, \bibinfo{person}{Geri Gay}, {and} \bibinfo{person}{Laura Granka}.}
  \bibinfo{year}{2007}\natexlab{}.
\newblock \showarticletitle{In Google We Trust: Users’ Decisions on Rank,
  Position, and Relevance}.
\newblock \bibinfo{journal}{\emph{Journal of computer-mediated communication}}
  \bibinfo{volume}{12}, \bibinfo{number}{3} (\bibinfo{year}{2007}),
  \bibinfo{pages}{801--823}.
\newblock


\bibitem[Plackett(1975)]%
        {plackett1975analysis}
\bibfield{author}{\bibinfo{person}{Robin~L Plackett}.}
  \bibinfo{year}{1975}\natexlab{}.
\newblock \showarticletitle{The Analysis of Permutations}.
\newblock \bibinfo{journal}{\emph{Journal of the Royal Statistical Society:
  Series C (Applied Statistics)}} \bibinfo{volume}{24}, \bibinfo{number}{2}
  (\bibinfo{year}{1975}), \bibinfo{pages}{193--202}.
\newblock


\bibitem[Qin and Liu(2013)]%
        {qin2013introducing}
\bibfield{author}{\bibinfo{person}{Tao Qin} {and} \bibinfo{person}{Tie-Yan
  Liu}.} \bibinfo{year}{2013}\natexlab{}.
\newblock \showarticletitle{Introducing LETOR 4.0 Datasets}.
\newblock \bibinfo{journal}{\emph{arXiv preprint arXiv:1306.2597}}
  (\bibinfo{year}{2013}).
\newblock


\bibitem[Radlinski and Dumais(2006)]%
        {radlinski2006improving}
\bibfield{author}{\bibinfo{person}{Filip Radlinski} {and}
  \bibinfo{person}{Susan Dumais}.} \bibinfo{year}{2006}\natexlab{}.
\newblock \showarticletitle{Improving personalized web search using result
  diversification}. In \bibinfo{booktitle}{\emph{Proceedings of the 29th annual
  international ACM SIGIR conference on Research and development in information
  retrieval}}. \bibinfo{pages}{691--692}.
\newblock


\bibitem[Radlinski et~al\mbox{.}(2008)]%
        {radlinski2008does}
\bibfield{author}{\bibinfo{person}{Filip Radlinski}, \bibinfo{person}{Madhu
  Kurup}, {and} \bibinfo{person}{Thorsten Joachims}.}
  \bibinfo{year}{2008}\natexlab{}.
\newblock \showarticletitle{How Does Clickthrough Data Reflect Retrieval
  Quality?}. In \bibinfo{booktitle}{\emph{CIKM}}. ACM, \bibinfo{pages}{43--52}.
\newblock


\bibitem[Raman et~al\mbox{.}(2013)]%
        {raman2013stable}
\bibfield{author}{\bibinfo{person}{Karthik Raman}, \bibinfo{person}{Thorsten
  Joachims}, \bibinfo{person}{Pannaga Shivaswamy}, {and}
  \bibinfo{person}{Tobias Schnabel}.} \bibinfo{year}{2013}\natexlab{}.
\newblock \showarticletitle{Stable Coactive Learning via Perturbation}. In
  \bibinfo{booktitle}{\emph{International Conference on Machine Learning}}.
  \bibinfo{pages}{837--845}.
\newblock


\bibitem[Rodrygo et~al\mbox{.}(2015)]%
        {rodrygo2015search}
\bibfield{author}{\bibinfo{person}{LT Rodrygo}, \bibinfo{person}{Craig
  Macdonald}, {and} \bibinfo{person}{Iadh Ounis}.}
  \bibinfo{year}{2015}\natexlab{}.
\newblock \showarticletitle{Search result diversification}.
\newblock \bibinfo{journal}{\emph{Foundations and Trends in Information
  Retrieval}} \bibinfo{volume}{9}, \bibinfo{number}{1} (\bibinfo{year}{2015}),
  \bibinfo{pages}{1--90}.
\newblock


\bibitem[Saito(2020)]%
        {saito2020unbiased}
\bibfield{author}{\bibinfo{person}{Yuta Saito}.}
  \bibinfo{year}{2020}\natexlab{}.
\newblock \showarticletitle{Unbiased Pairwise Learning from Biased Implicit
  Feedback}. In \bibinfo{booktitle}{\emph{Proceedings of the 2020 ACM SIGIR on
  International Conference on Theory of Information Retrieval}}.
  \bibinfo{pages}{5--12}.
\newblock


\bibitem[Sanderson(2010)]%
        {sanderson2010}
\bibfield{author}{\bibinfo{person}{Mark Sanderson}.}
  \bibinfo{year}{2010}\natexlab{}.
\newblock \showarticletitle{Test Collection Based Evaluation of Information
  Retrieval Systems}.
\newblock \bibinfo{journal}{\emph{Foundations and Trends in Information
  Retrieval}} \bibinfo{volume}{4}, \bibinfo{number}{4} (\bibinfo{year}{2010}),
  \bibinfo{pages}{247--375}.
\newblock


\bibitem[Sanderson et~al\mbox{.}(2010)]%
        {sanderson2010user}
\bibfield{author}{\bibinfo{person}{Mark Sanderson},
  \bibinfo{person}{Monica~Lestari Paramita}, \bibinfo{person}{Paul Clough},
  {and} \bibinfo{person}{Evangelos Kanoulas}.} \bibinfo{year}{2010}\natexlab{}.
\newblock \showarticletitle{Do User Preferences and Evaluation Measures Line
  Up?}. In \bibinfo{booktitle}{\emph{Proceedings of the 33rd international ACM
  SIGIR conference on Research and development in information retrieval}}.
  \bibinfo{pages}{555--562}.
\newblock


\bibitem[Schuth et~al\mbox{.}(2016)]%
        {schuth2016mgd}
\bibfield{author}{\bibinfo{person}{Anne Schuth}, \bibinfo{person}{Harrie
  Oosterhuis}, \bibinfo{person}{Shimon Whiteson}, {and}
  \bibinfo{person}{Maarten de Rijke}.} \bibinfo{year}{2016}\natexlab{}.
\newblock \showarticletitle{Multileave Gradient Descent for Fast Online
  Learning to Rank}. In \bibinfo{booktitle}{\emph{WSDM}}. ACM,
  \bibinfo{pages}{457--466}.
\newblock


\bibitem[Sorokina and Cantu-Paz(2016)]%
        {sorokina2016amazon}
\bibfield{author}{\bibinfo{person}{Daria Sorokina} {and} \bibinfo{person}{Erick
  Cantu-Paz}.} \bibinfo{year}{2016}\natexlab{}.
\newblock \showarticletitle{Amazon Search: The Joy of Ranking Products}. In
  \bibinfo{booktitle}{\emph{Proceedings of the 39th International ACM SIGIR
  conference on Research and Development in Information Retrieval}}.
  \bibinfo{pages}{459--460}.
\newblock


\bibitem[Strehl et~al\mbox{.}(2010)]%
        {strehl2010logged}
\bibfield{author}{\bibinfo{person}{Alex Strehl}, \bibinfo{person}{John
  Langford}, \bibinfo{person}{Lihong Li}, {and} \bibinfo{person}{Sham~M
  Kakade}.} \bibinfo{year}{2010}\natexlab{}.
\newblock \showarticletitle{Learning from Logged Implicit Exploration Data}. In
  \bibinfo{booktitle}{\emph{Advances in Neural Information Processing
  Systems}}, \bibfield{editor}{\bibinfo{person}{J.~Lafferty},
  \bibinfo{person}{C.~Williams}, \bibinfo{person}{J.~Shawe-Taylor},
  \bibinfo{person}{R.~Zemel}, {and} \bibinfo{person}{A.~Culotta}} (Eds.),
  Vol.~\bibinfo{volume}{23}. \bibinfo{publisher}{Curran Associates, Inc.}
\newblock


\bibitem[Swaminathan and Joachims(2015)]%
        {swaminathan2015self}
\bibfield{author}{\bibinfo{person}{Adith Swaminathan} {and}
  \bibinfo{person}{Thorsten Joachims}.} \bibinfo{year}{2015}\natexlab{}.
\newblock \showarticletitle{The Self-Normalized Estimator for Counterfactual
  Learning}. In \bibinfo{booktitle}{\emph{Advances in Neural Information
  Processing Systems}}. \bibinfo{pages}{3231--3239}.
\newblock


\bibitem[Vardasbi et~al\mbox{.}(2020a)]%
        {vardasbi2020cascade}
\bibfield{author}{\bibinfo{person}{Ali Vardasbi}, \bibinfo{person}{Maarten de
  Rijke}, {and} \bibinfo{person}{Ilya Markov}.}
  \bibinfo{year}{2020}\natexlab{a}.
\newblock \showarticletitle{Cascade Model-Based Propensity Estimation for
  Counterfactual Learning to Rank}. In \bibinfo{booktitle}{\emph{Proceedings of
  the 43rd International ACM SIGIR Conference on Research and Development in
  Information Retrieval}}. \bibinfo{pages}{2089--2092}.
\newblock


\bibitem[Vardasbi et~al\mbox{.}(2021)]%
        {vardasbi2021mixture}
\bibfield{author}{\bibinfo{person}{Ali Vardasbi}, \bibinfo{person}{Maarten de
  Rijke}, {and} \bibinfo{person}{Ilya Markov}.}
  \bibinfo{year}{2021}\natexlab{}.
\newblock \showarticletitle{Mixture-Based Correction for Position and Trust
  Bias in Counterfactual Learning to Rank}. In
  \bibinfo{booktitle}{\emph{Proceedings of the 30th ACM International
  Conference on Information \& Knowledge Management}}.
  \bibinfo{pages}{1869--1878}.
\newblock


\bibitem[Vardasbi et~al\mbox{.}(2020b)]%
        {vardasbi2020trust}
\bibfield{author}{\bibinfo{person}{Ali Vardasbi}, \bibinfo{person}{Harrie
  Oosterhuis}, {and} \bibinfo{person}{Maarten de Rijke}.}
  \bibinfo{year}{2020}\natexlab{b}.
\newblock \showarticletitle{When Inverse Propensity Scoring does not Work:
  Affine Corrections for Unbiased Learning to Rank}. In
  \bibinfo{booktitle}{\emph{Proceedings of the 28th ACM International
  Conference on Information and Knowledge Management}}.
\newblock


\bibitem[Wang et~al\mbox{.}(2021)]%
        {wang2021non}
\bibfield{author}{\bibinfo{person}{Nan Wang}, \bibinfo{person}{Zhen Qin},
  \bibinfo{person}{Xuanhui Wang}, {and} \bibinfo{person}{Hongning Wang}.}
  \bibinfo{year}{2021}\natexlab{}.
\newblock \showarticletitle{Non-clicks Mean Irrelevant? Propensity Ratio
  Scoring as a Correction}. In \bibinfo{booktitle}{\emph{Proceedings of the
  14th ACM International Conference on Web Search and Data Mining}}.
  \bibinfo{pages}{481--489}.
\newblock


\bibitem[Wang et~al\mbox{.}(2016)]%
        {wang2016learning}
\bibfield{author}{\bibinfo{person}{Xuanhui Wang}, \bibinfo{person}{Michael
  Bendersky}, \bibinfo{person}{Donald Metzler}, {and} \bibinfo{person}{Marc
  Najork}.} \bibinfo{year}{2016}\natexlab{}.
\newblock \showarticletitle{Learning to Rank with Selection Bias in Personal
  Search}. In \bibinfo{booktitle}{\emph{Proceedings of the 39th International
  ACM SIGIR conference on Research and Development in Information Retrieval}}.
  ACM, \bibinfo{pages}{115--124}.
\newblock


\bibitem[Wang et~al\mbox{.}(2018a)]%
        {wang2018position}
\bibfield{author}{\bibinfo{person}{Xuanhui Wang}, \bibinfo{person}{Nadav
  Golbandi}, \bibinfo{person}{Michael Bendersky}, \bibinfo{person}{Donald
  Metzler}, {and} \bibinfo{person}{Marc Najork}.}
  \bibinfo{year}{2018}\natexlab{a}.
\newblock \showarticletitle{Position Bias Estimation for Unbiased Learning to
  Rank in Personal Search}. In \bibinfo{booktitle}{\emph{Proceedings of the
  Eleventh ACM International Conference on Web Search and Data Mining}}. ACM,
  \bibinfo{pages}{610--618}.
\newblock


\bibitem[Wang et~al\mbox{.}(2018b)]%
        {wang2018lambdaloss}
\bibfield{author}{\bibinfo{person}{Xuanhui Wang}, \bibinfo{person}{Cheng Li},
  \bibinfo{person}{Nadav Golbandi}, \bibinfo{person}{Michael Bendersky}, {and}
  \bibinfo{person}{Marc Najork}.} \bibinfo{year}{2018}\natexlab{b}.
\newblock \showarticletitle{The LambdaLoss Framework for Ranking Metric
  Optimization}. In \bibinfo{booktitle}{\emph{Proceedings of the 27th ACM
  International Conference on Information and Knowledge Management}}. ACM,
  \bibinfo{pages}{1313--1322}.
\newblock


\bibitem[Wu et~al\mbox{.}(2021)]%
        {wu2021unbiased}
\bibfield{author}{\bibinfo{person}{Xinwei Wu}, \bibinfo{person}{Hechang Chen},
  \bibinfo{person}{Jiashu Zhao}, \bibinfo{person}{Li He},
  \bibinfo{person}{Dawei Yin}, {and} \bibinfo{person}{Yi Chang}.}
  \bibinfo{year}{2021}\natexlab{}.
\newblock \showarticletitle{Unbiased Learning to Rank in Feeds Recommendation}.
  In \bibinfo{booktitle}{\emph{Proceedings of the 14th ACM International
  Conference on Web Search and Data Mining}}. \bibinfo{pages}{490--498}.
\newblock


\bibitem[Xia et~al\mbox{.}(2008)]%
        {xia2008listwise}
\bibfield{author}{\bibinfo{person}{Fen Xia}, \bibinfo{person}{Tie-Yan Liu},
  \bibinfo{person}{Jue Wang}, \bibinfo{person}{Wensheng Zhang}, {and}
  \bibinfo{person}{Hang Li}.} \bibinfo{year}{2008}\natexlab{}.
\newblock \showarticletitle{Listwise Approach to Learning to Rank: Theory and
  Algorithm}. In \bibinfo{booktitle}{\emph{Proceedings of the 25th
  international conference on Machine learning}}. \bibinfo{pages}{1192--1199}.
\newblock


\bibitem[Xia et~al\mbox{.}(2017)]%
        {xia2017adapting}
\bibfield{author}{\bibinfo{person}{Long Xia}, \bibinfo{person}{Jun Xu},
  \bibinfo{person}{Yanyan Lan}, \bibinfo{person}{Jiafeng Guo},
  \bibinfo{person}{Wei Zeng}, {and} \bibinfo{person}{Xueqi Cheng}.}
  \bibinfo{year}{2017}\natexlab{}.
\newblock \showarticletitle{Adapting Markov Decision Process for Search Result
  Diversification}. In \bibinfo{booktitle}{\emph{Proceedings of the 40th
  International ACM SIGIR Conference on Research and Development in Information
  Retrieval}}. \bibinfo{pages}{535--544}.
\newblock


\bibitem[Yue and Joachims(2009)]%
        {yue2009interactively}
\bibfield{author}{\bibinfo{person}{Yisong Yue} {and} \bibinfo{person}{Thorsten
  Joachims}.} \bibinfo{year}{2009}\natexlab{}.
\newblock \showarticletitle{Interactively Optimizing Information Retrieval
  Systems as a Dueling Bandits Problem}. In
  \bibinfo{booktitle}{\emph{Proceedings of the 26th Annual International
  Conference on Machine Learning}}. ACM, \bibinfo{pages}{1201--1208}.
\newblock


\bibitem[Zhuang et~al\mbox{.}(2021)]%
        {zhuang2021cross}
\bibfield{author}{\bibinfo{person}{Honglei Zhuang}, \bibinfo{person}{Zhen Qin},
  \bibinfo{person}{Xuanhui Wang}, \bibinfo{person}{Michael Bendersky},
  \bibinfo{person}{Xinyu Qian}, \bibinfo{person}{Po Hu}, {and}
  \bibinfo{person}{Dan~Chary Chen}.} \bibinfo{year}{2021}\natexlab{}.
\newblock \showarticletitle{Cross-Positional Attention for Debiasing Clicks}.
  In \bibinfo{booktitle}{\emph{Proceedings of the Web Conference 2021}}.
  \bibinfo{pages}{788--797}.
\newblock


\end{thebibliography}

\end{document}